\documentclass[sts]{imsart}

\usepackage{amsmath}
\usepackage{amssymb}
\usepackage{latexsym}
\usepackage{mathrsfs}
\usepackage{natbib}
\usepackage{hyperref}
\usepackage{amsthm}
\usepackage{graphicx,psfrag,epsf}
\usepackage{enumerate}
\usepackage{caption}
\usepackage{appendix}

\newcommand{\de}{\mathrm{d}}

\newcommand{\x}{{\mathbf x}}
\newcommand{\R}{{\mathbb R}}

\newcommand{\B}{{\mathcal B}}

\newcommand{\1}{{\mathbf 1}}

\def\P{\mathbb P}
\def\E{\mathbb E}

\newtheorem{thm}{Theorem}[section]

\newtheorem{definition}{Definition}[section]

\newtheorem{lemma}{Lemma}[section]
\newtheorem{remark}{Remark}[section]
\newtheorem{cor}{Corollary}[section]

\graphicspath{{./Figures/}}

\begin{document}
\begin{frontmatter}

\title{Inhomogeneous higher-order summary statistics for linear network point processes 
}
\runtitle{Higher-order summary statistics}

\begin{aug}
\author{\fnms{Ottmar} \snm{Cronie}\ead[label=e1]{}},
\author{\fnms{Mehdi} \snm{Moradi}\thanksref{t3}\ead[label=e2]{mehdi.moradi@unavarra.es}}
\and
\author{\fnms{Jorge} \snm{Mateu}
\ead[label=e3]{}\ead[label=u1,url]{}}

\thankstext{t3}{Corresponding author  \printead{e2}}

\runauthor{Cronie et al.}

\affiliation{Ume{\aa} University, Public University of Navarra and University Jaume I}

\address{Department of Mathematics and Mathematical Statistics, Ume{\aa} University, Sweden; Department of Statistics and Operation Research, Public University of Navarra, Pamplona, Spain;}

\address{and Department of Mathematics, University Jaume I, Castell\'{o}n, Spain.}
\end{aug}

\begin{abstract}
We introduce the notion of intensity reweighted moment pseudostationary point processes on linear networks. Based on arbitrary general regular linear network distances, we propose geometrically corrected versions of different higher-order summary statistics, 
including 
the inhomogeneous empty space function, the inhomogeneous nearest neighbour distance distribution function and the inhomogeneous $J$-function. We also discuss their non-parametric estimators. Through a simulation study, considering models with different types of spatial interaction, we study the performance of our proposed summary statistics. Finally, we make use of our methodology to analyse two datasets: motor vehicle traffic accidents and spider data.
\end{abstract}

\begin{keyword}
\kwd{Inhomogeneous empty space function}
\kwd{Inhomogeneous $J$-function}
\kwd{Inhomogeneous nearest neighbour distance distribution function}
\kwd{Linear network}
\kwd{Product density}
\kwd{Regular distance metric}
\kwd{Traffic accident data}
\end{keyword}
\end{frontmatter}

\section{Introduction}

Nowadays point patterns are sampled on a variety of different spatial domains \citep{BRT15}. In particular, point patterns on linear networks and their associated statistical analysis have gained a considerable amount of interest; Figure \ref{fig:data} illustrates two such datasets, which will be analysed in this paper. 
Non-parametric analyses of linear network point processes usually tend to have two main ingredients: intensity estimation (first order) and estimation of summary statistics, which indicate whether the underlying point process tends to have a clustering/aggregating or inhibiting/regular behaviour. In essence, such a summary statistic reflects different characteristics of the distribution of points around a point of the point process and/or an arbitrary location within the study region. 
Such analyses differ depending on whether one assumes that the underlying point process has a non-constant intensity function, 
which is referred to as 
inhomogeneity. 
Thusfar, attention has 
mainly been paid to non-parametric estimators for 
second-order summary statistics, such as $K$-functions and pair correlation functions. 
A wide review of different non-parametric estimators for first- and second-order summary statistics of point patterns on linear networks can also be found in \cite{moradi2018spatial}. 
\begin{figure}[!ht]
    \centering
    \includegraphics[scale=.3]{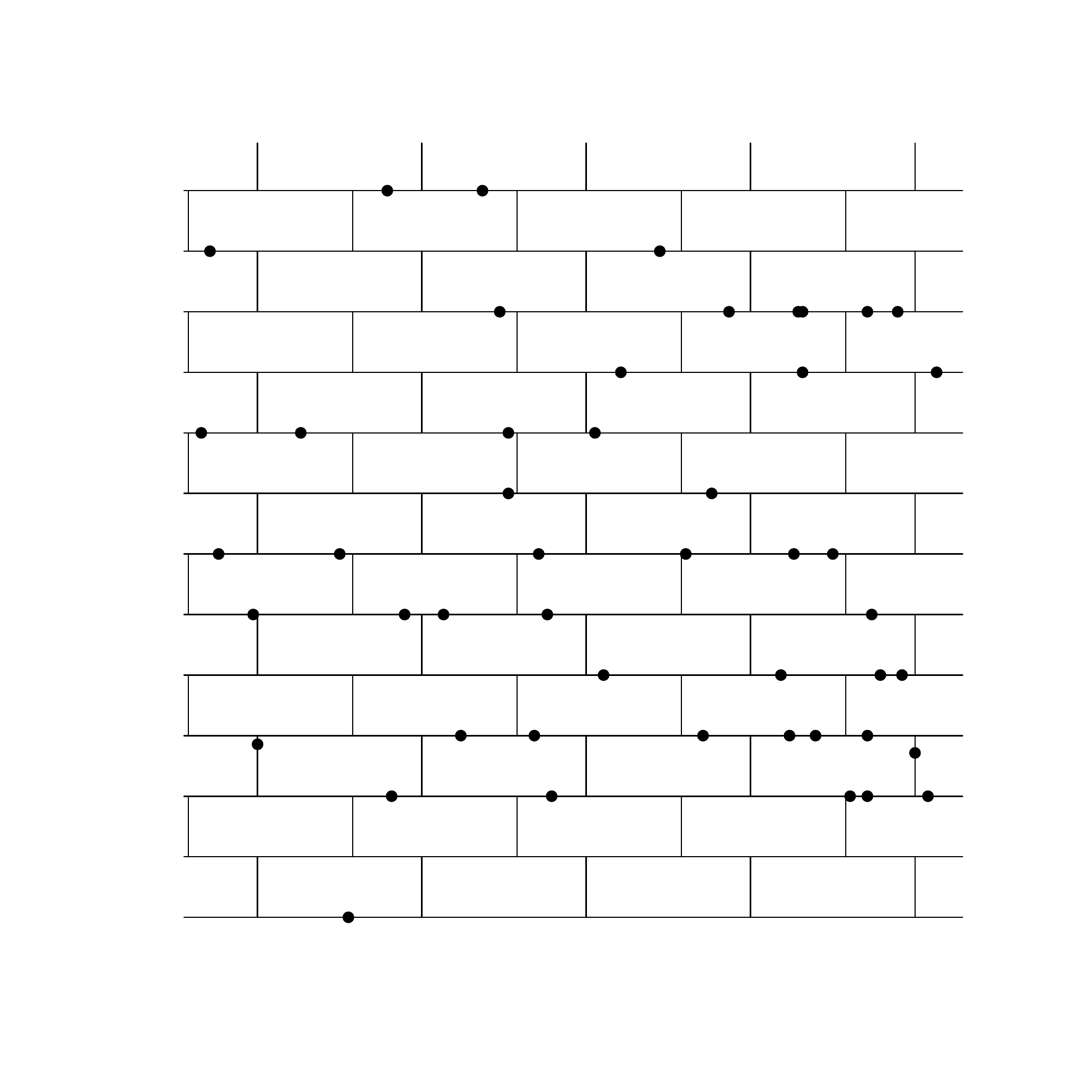}
    \includegraphics[scale=.33]{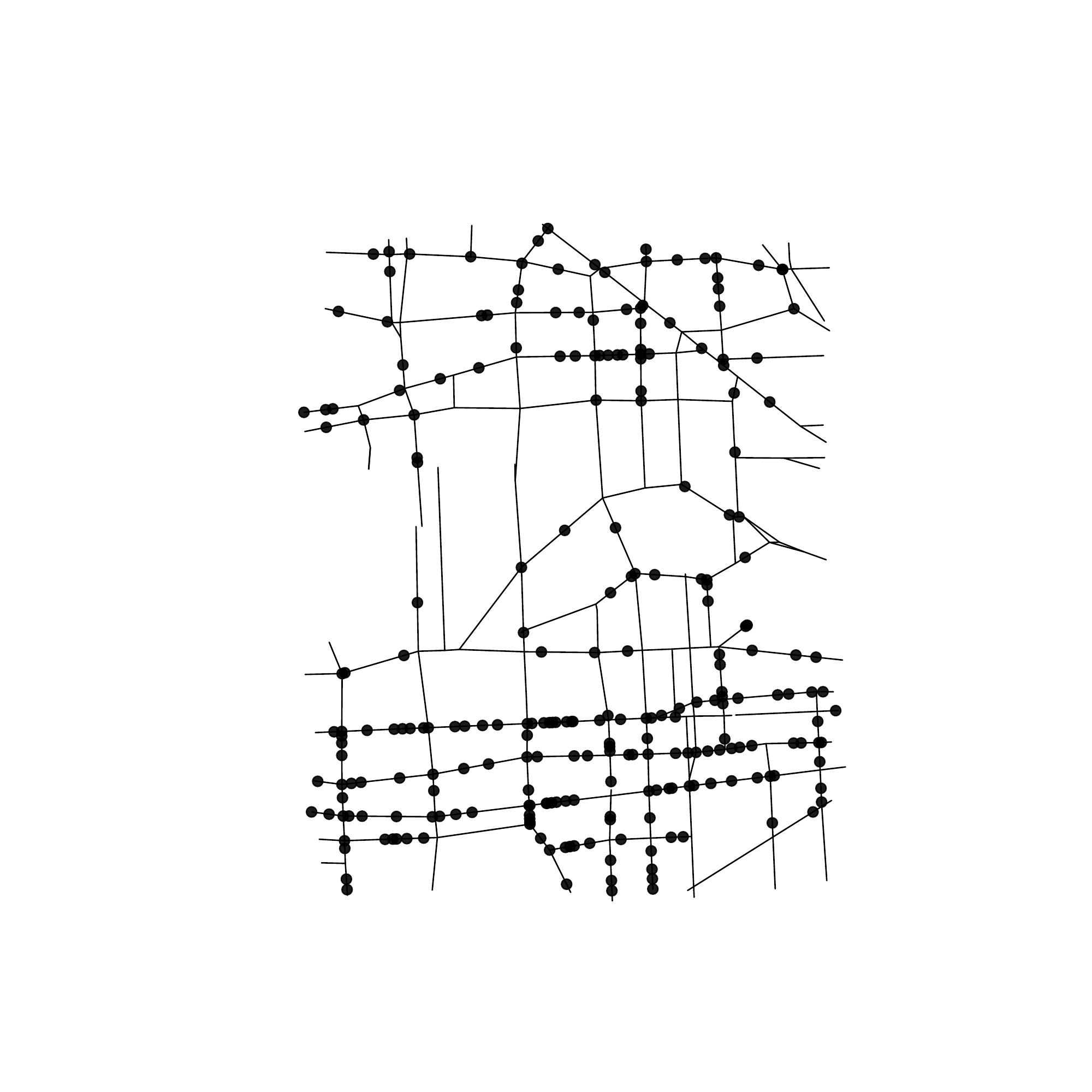}
    \caption{{\em Left}: Spider webs on a brick wall. {\em Right}: Motor vehicle traffic accidents in an area of Houston, US, during April, $1999$.
    }
    \label{fig:data}
\end{figure}
Initially a few poorly performing kernel-based intensity estimators were proposed \citep{B05,B08,XZY08}. Later, other non-parametric kernel-based intensity estimators were defined \citep{OSS09,OS12,mcswiggan2017,MFJ18} which, 
although 
being statistically well-defined, 
tended to  
be computationally expensive on large networks. Moreover, \cite{rakshit2019fast} proposed a fast kernel intensity estimator based on a  two-dimensional convolution which can be computed rapidly even on large networks. 
With the aim of finding middle-ground between global and local smoothing, as well as an alternative to kernel estimation,  \cite{Moradi2019} introduced their so-called resample-smoothing technique which they applied to Voronoi intensity estimators on arbitrary spaces.
They showed that their estimation approach mostly performs 
better than kernel estimators, in terms of bias and standard error. 


Regarding second-order summary statistics and their estimation, \cite{OY01} considered an analogue of Ripley's $K$-function for homogeneous linear network point processes, which was obtained by using the shortest-path distance instead of the Euclidean distance when measuring distance between points. 
However, this modification did not provide a well-defined $K$-function for 
linear network point processes since its behaviour depends on the topography of the network in question. As a remedy, \cite{ABN12} introduced geometrically corrected second-order summary statistics 
which 
did not depend on the explicit geometry of the linear network under consideration and has a fixed known behaviour for Poisson processes. 
Hence, the geometrically corrected $K$-function and pair correlation function can be used e.g.~for model selection, hypothesis testing and residual analyses. These summary statistics were later extended to the case of multitype and spatio-temporal point patterns by \citet{baddeley14} and \citet{moradispacetime}. 
Surrounding theses papers, there appeared a discussion on explicit choices of distances to be used for point processes on linear networks -- e.g., is the shortest-path distance always the canonical choice of metric? 
Taking this into account, \cite{rakshit2017second} redefined the $K$- and pair correlation functions of \cite{ABN12} under very general assumptions on the distance/metric used. More specifically, they considered what they referred to as the family of {\em regular distance metrics}.

Although second-order summary statistics are invaluable tools to analyse interaction among points of a point process, the point process may show structure beyond pairwise interactions. For this reason, in the case of point processes in $\R^d$, $d\geq1$, it is common to also study the (inhomogeneous) empty space function and the (inhomogeneous) nearest neighbour distance distribution function  \citep{MW04,IPSS08,van11,CSKWM13,BRT15}. 
Moreover, a combination of these summary statistics 
is provided through the (inhomogeneous) $J$-function \citep{VB96,van11}, which is a powerful quantifyer of points' tendency to cluster around or to inhibit each other. Although these summary statistics are well studied for spatial, spatio-temporal and marked point processes in $\R^d$, $d\geq1$  \citep{VB96,van2006Mark,van11,Cronie2015,Cronie2016},  their
linear network point process versions have not yet appeared in the literature. 
The reasons for this seem mainly to be related to theoretical challenges connected to the geometry of the linear network under consideration. 
In this paper we tackle this problem and propose geometrically corrected analogues of these summary statistics for point processes on linear networks, which are defined based on regular linear network distances \citep{rakshit2017second}. Moreover, to do so we introduce the class of intensity reweighted moment pseudostationary (IRMPS) point processes, which in turn (perhaps less interestingly) yields a definition of stationarity for linear network point processes. 
To best connect our work to the existing literature on statistics for linear network point patterns, we carry out our numerical evaluations using the shortest-path distance as metric.







The paper is organised as follows. Section \ref{sec:Backgroun} provides a wide background of spatial point processes on linear networks. In Section \ref{sec:method} we review higher-order summary statistics and propose their geometrically corrected analogues for point patterns on linear networks. Section \ref{sec:numerical} is devoted to evaluating the performance of the geometrically corrected inhomogeneous linear $J$-function for a few models with different types of interaction. In Section \ref{sec:data} we apply the geometrically corrected inhomogeneous linear $J$-function to two real datasets. The paper ends with a discussion in Section \ref{sec:discuss}.

\section{Preliminaries
}\label{sec:Backgroun}
Throughout, $\R^d$, $d\geq1$, denotes the $d$-dimensional Euclidean space, $\|\cdot\|$ denotes the $d$-dimensional Euclidean norm, and all subsets under consideration will be Borel sets in the space in question. Moreover, $\int\de_1u$ will be used to denote integration with respect to arc length 
and $\int\de x$ will be used to denote integration with respect to Lebesgue measure.

\subsection{Linear networks}\label{s:LinearNetworks}

Linear networks are, among other things, convenient tools for approximating geometric graphs/spatial networks. The spatial statistical literature usually defines a linear network as a finite union of (non-disjoint) line segments \citep{ABN12,BRT15,rakshit2017second}. 
More specifically, we define a linear network as a union
\[
L=\bigcup_{i=1}^k l_i, 
\qquad 1\leq k<\infty,
\]
of $k$ line segments $l_i=[u_i,v_i]=\{tu_i + (1-t)v_i:0\leq t\leq 1\}\subseteq\R^2$, $u_i\neq v_i\in\R^2$, with (arc) lengths $|l_i|=\|u_i-v_i\|\in(0,\infty)$, $i=1,\ldots,k$, which are such that any intersection $l_i\cap l_j$, $j\neq i$, is either empty or given by line segment end points. 
We here restrict ourselves to connected networks since disconnected ones may simply be represented as unions of connected ones.

The Borel sets on $L$ are given by 
 $\B(L)=\{A\cap L:A\subseteq\R^2\}$ 
and they coincide with the $\sigma$-algebra generated by $\tau_L=\{A\cap L:A\text{ is an open subset of }\R^2\}$; recall that $A\subseteq L$ will mean that $A$ belongs to $\B(L)$. 
We further endow $L$ with the Borel measure $|A|=\nu_L(A)=\int_A\de_1u$, $A\subseteq L$, which represents integration with respect to arc length. Note that the total network length is given by $|L|=\sum_{i=1}^k |l_i|$.

\begin{remark}
One could, in principle, also allow $k=\infty$ with the additional assumption of local finiteness, i.e.~any compact $A\subseteq\R^2$ intersects at most a finite number of line segments, which excludes pathological cases. This would result in the total network length $|L|=\infty$ and, as a consequence, one would allow networks which are isometric to $\R$.
\end{remark}

Each linear network $L$ also has a graph theoretic interpretation. The endpoints of each line segment are called nodes/vertices, and the degree of each node is the number of line segments (edges) which share that node. The boundary of $L$ is the set of all nodes with degree one and is denoted by $\partial L$. See e.g.~\citet{eckardt2018point} for further details on graph theoretical aspects of linear networks.

\subsection{Linear network point processes}
Heuristically, a point process is a generalised sample in which the points may be dependent and the total point count may be random. More formally, given some probability space $(\Omega,\mathcal{F},\P)$, a (finite simple) point process $X=\{x_i\}_{i=1}^N$, $0\leq N<\infty$, on a linear network $L$ is a random element/variable in the measurable space $N_{f}$ of point configurations $\x=\{x_1,\ldots,x_n\}\subseteq L$, $0\leq n<\infty$; the associated $\sigma$-algebra is generated by the cardinality mappings $\x\mapsto N(\x\cap A)\in\{0,1,\ldots\}$, $A\subseteq L$, $\x\in N_{f}$, and coincides with the Borel $\sigma$-algebra generated by a certain metric on $N_{f}$ \citep{DVJ2}.

\subsubsection{Product densities}
Throughout, we will assume that the product densities/intensity functions $\rho^{(m)}$ of all orders $m\geq1$ exist. Formally, they may be defined through Campbell formulas: 
for any non-negative 
measurable function  $f(\cdot)$ on the product space $L^m$,
\begin{align}\label{eq:product}
\E\left[\mathop{\sum\nolimits\sp{\ne}}_{x_1,\ldots,x_m\in X}f(x_1,\ldots,x_m)\right]
&=
\int_{L^m}
f(u_1,\ldots,u_m) \rho^{(m)}(u_1,\ldots,u_m)\de_1u_1\cdots\de_1u_m
.
\end{align}
Here the notation $\sum^{\neq}$ is used to indicate that the summation is taken over distinct $m$-tuples. 
Since $X$ is simple, i.e.~$x_i\neq x_j$ for any $i\neq j$, $x_i,x_j\in X$, we interpret $\rho^{(m)}(u_1,\ldots,u_m)\de_1u_1\cdots\de_1u_m$ as the probability of jointly finding points of $X$ in some infinitesimal disjoint neighbourhoods $du_1,\ldots,du_m\subseteq L$ of $u_1,\ldots,u_m\in L$, with sizes $|du_1|=\de_1u_1,\ldots,|du_m|=\de_1u_m$.

In the particular case $m=1$, the right hand side of equation \eqref{eq:product} reduces to $\int_L f(u) \rho(u)\de_1u$, and in particular $\E[N(X\cap A)]=\int_A\rho(u)\de_1u$, $A\subseteq L$, where $\rho(u)=\rho^{(1)}(u)$ is called the intensity function of $X$. Whenever $\rho(u)=\rho>0$, $u\in L$, is constant, we say that $X$ is homogeneous and otherwise $X$ is called inhomogeneous.

\subsubsection{Correlation functions}
As with any joint probability structure and its relationship to its marginal probabilities, product densities are such that large/small values of $\rho^{(m)}(u_1,\ldots,u_m)$ do not necessarily imply that there is strong/weak dependence between points of $X$ located around $u_1,\ldots,u_m\in L$. For instance, for Poisson processes, where the points are independent, we have $\rho^{(m)}(u_1,\ldots,u_m)=\prod_{i=1}^m\rho(u_i)$ so any $\rho(u_i)$ being large may imply that $\rho^{(m)}(u_1,\ldots,u_m)$ is large. 
Instead, in order to study $m$-point dependencies it is more natural to consider so-called correlation functions $g_m$, $m\geq1$ (which do not actually represent correlations):
\begin{equation}
\label{CorrelationFunctions}
g_m(u_1,\ldots,u_m)
=
\frac{\rho^{(m)}(u_1,\ldots,u_m)}
{\rho(u_1)\cdots\rho(u_m)}, 
\qquad u_1,\ldots,u_m \in L.
\end{equation}
Note that $g_1(\cdot)=\rho(\cdot)/\rho(\cdot)=1$. 
Clearly, for a Poisson process with intensity function $\rho(\cdot)$ we have $g_m(\cdot)=1$, $m\geq1$, so we interpret $g_m(u_1,\ldots,u_m)>1$ as clustering/attraction between points of $X$ located around $u_1,\ldots,u_m\in L$. Similarly, $g_m(u_1,\ldots,u_m)<1$ indicates inhibition/regularity. There further exist recursively defined expansions of $g_m$, $m\geq1$ \citep{van11}:
\begin{align}
\label{Expansion}
    g_m(u_1,\ldots,u_m)&=\sum_{j=1}^m \sum_{D_1,\ldots,D_j}\xi_{N(D_1)}\left(\{u_j:j\in D_1\}\right)\cdots\xi_{N(D_j)}\left(\{u_j:j\in D_j\}\right),
\end{align}
where the sum $\sum_{D_1,\ldots,D_j}$ ranges over all partitions $\{D_1,\ldots,D_j\}$ of $\{1,\ldots,m\}$ into $j$ non-empty and disjoint sets. 
For instance, $g_2(u,v)-g_1(u)=(\xi_2(u,v) + 1)-1=\xi_2(u,v)$. 

\subsubsection{Reduced Palm distributions}
A central tool in the study of a point process $X$ is its family of reduced Palm distributions $\{\P_u^!(X\in\cdot):u\in L\}$. Heuristically, $\P_u^!(X\in\cdot)$ represents the distribution of $X$ conditionally on $X$ having a point at $u$ which is removed once the process is realised; there actually exists a well-defined point process $X_u^!$ with distribution $\P_u^!(X\in\cdot)$. Formally, the most convenient way of defining $\{\P_u^!(X\in\cdot):u\in L\}$ is as the family of regular conditional distributions satisfying the reduced Campbell-Mecke formula \citep[Appendix C]{MW04}: For any non-negative and measurable mapping $f$ on $L\times N_{f}$,
\begin{align}
\label{RedCM}
    \E\left[\sum_{x\in X}f(x,X\setminus\{x\})\right]
    =
    \int_L\E[f(u,X_u^!)]\rho(u)\de_1u
    =
    \int_L\E_u^![f(u,X)]\rho(u)\de_1u,
\end{align}
where $\E_u^![\cdot]$ denotes expectation under $\P_u^!(X\in\cdot)$. 
\subsection{Second-order summary statistics}\label{s:SecondOrder}

Recalling \eqref{CorrelationFunctions}, 
the particular function 
\begin{eqnarray}\label{eq:pairnetwork}
g(u,v)=g_2(u,v)=\frac{\rho^{(2)}(u,v)}{\rho(u)\rho(v)}, \qquad u,v \in L,
\end{eqnarray}
which quantifies pairwise interactions in $X$, is commonly referred to as the pair correlation function. 
In practice, however, it is often more convenient to work with cumulative versions, so-called $K$-functions. Statistical estimators of 
such functions may be considered in e.g.~exploratory data analyses, hypothesis testing and residual analyses. Further details on $K$-functions and pair correlation functions, together with their estimators, can be found in \citet{ABN12}, \citet[Chapter 17]{BRT15} and \citet{rakshit2017second}.

If we were to define the inhomogeneous $K$-function 
of $X$ on $L$ in accordance with its original definition \citep{InhomK2000}, we would define it as 
\begin{align}
\label{KinhomBMW}
\bar K_{\rm inhom}^L(r)
&=
\frac{1}{|W|}\E\left[\mathop{\sum\nolimits\sp{\ne}}_{x_1,x_2\in X}\frac{\1\{x_1\in W\} \1\{x_2\in b_L(x_1,r)\}}{\rho(x_1)\rho(x_2)}\right]
\\
&=
\frac{1}{|W|}
\int_W
\E_u^!\left[\sum_{x\in X}\frac{\1\{x\in b_L(u,r)\}}{\rho(x)}\right]\de_1u, \nonumber
\end{align}
for $r\geq0$, some $W\subseteq L$, $|W|>0$, and the $r$-ball $b_L(u,r)=\{v\in L:d_L(u,v)\leq r\}$, $u\in L$, which is determined by a distance/metric $d_L$ on $L$; 
the second equality follows by \eqref{RedCM}. 
There are, however, a few questions which immediately appear here. 

The first question is related to something as basic as what the ball $b_L(u,r)$ looks like. 
Note further that the geometry of the ball $b_L(u,r)$ may change with $u\in L$ (there may e.g.~exist $u,v\in L$, $u\neq v$, such that $|b_L(u,r)|\neq|b_L(v,r)|$). 
Secondly, for the Euclidean version $\bar K_{\rm inhom}^{\R^d}(r)$, which is obtained by replacing $W\subseteq L$ by $W\subseteq \R^d$ and letting $X$ be a point process on $\R^d$ in \eqref{KinhomBMW}, \citet{InhomK2000} assumed so-called second-order intensity reweighted stationarity (SOIRS), i.e.~that the intensity function of $X\subseteq\R^d$ is strictly positive and that the pair correlation function of $X$ is translation invariant in the sense that $g(u_1,u_2)=\bar g(u_1-u_2)$, $u_1,u_2\in\R^d$, for some function $\bar g:\R^d\to[0,\infty)$. 
Note that in the linear network setting, SOIRS cannot be defined in an analogous way since the vector $u_1-u_2$, and thereby $\bar g(u_1-u_2)$, does not make sense on a linear network (e.g., for most $u_1,u_2\in L$ we have $u_1-u_2\notin L$).
In the Euclidean setting, under SOIRS and by the Campbell formula, we obtain that 
\begin{align}
\label{KinhomBMWEuclidean}
\bar K_{\rm inhom}^{\R^d}(r)
=
\frac{1}{|W|}
\int_W\int_{b_{\R^d}(u_1,r)} g(u_1,u_2) \de u_1\de u_2
=
\int_{b_{\R^d}(u,r)} \bar g(u) \de u,
\end{align}
for any $u\in\R^d$. Hence, $\bar K_{\rm inhom}^{\R^d}(r)$ does not depend on the choice of the  (positively Lebesgue sized) set $W\subseteq\R^2$ in \eqref{KinhomBMW} and the reduced Palm expectation in the Euclidean version of \eqref{KinhomBMW} does not depend on $u\in\R^d$.
This last observation is highly important in statistical settings -- essentially, we need the reduced Palm integrand to be constant in $u$ since we then may carry out the estimation of the inhomogeneous $K$-function by averaging over points in $X\cap W$, rather than requiring repeated samples of $X\cap W$.

It turns out that the choice of metric $d_L$ and an appropriate notion of SOIRS for linear network point processes are related to each other, and that the former gives rise to the latter. We next look closer at metric choices. 

\subsubsection{General regular distance metrics}
\citet{ABN12} used the shortest-path distance as metric $d_L$ and defined a linear network point process as being {\em second-order intensity reweighted pseudostationary (SOIRPS)} if $g(u_1,u_2)=\bar g(d_L(u_1,u_2))$, $u_1,u_2\in L$, for some function $\bar g:[0,\infty)\to[0,\infty)$; note that \citet{ABN12} referred to this as second-order reweighted pseudostationary (SORS). 
Taking the work of \citet{ABN12} a step further, \citet{rakshit2017second} extended the second-order analysis for linear network point processes to incorporate a larger class of metrics on linear networks, namely so-called {\em regular distance metrics}; the shortest-path distance belongs to this family. The motivation for doing so had partly to do with there being several metrics available which suit different applications and partly to do with the family of SOIRPS point processes being relatively small. 

In accordance with \citet{rakshit2017second}, we next give a brief account on regular distance metrics $d_L$ -- note that we make no explicit assumption that $d_L$ metrises $L$, i.e.~that $d_L$ generates the (subspace) topology $\tau_L$ inherited from $\R^2$ (see Section \ref{s:LinearNetworks}). Recall that a (distance) metric on $L$ is a function $d_L: L \times L \to [0,\infty)$ satisfying i) $d_L(u,u)=0$, ii) $d_L(u,v)=d_L(v,u)$, and iii) $d_L(u,v)\leq d_L(u,w)+d_L(v,w)$ for any $u,v,w\in L$. 

\begin{definition}\citep[Definition 1]{rakshit2017second}
A {\em regular distance metric} on a linear network $L$ is a metric $d_L: L \times L \to [0,\infty)$ which further satisfies i) being a continuous function of $u$ and $v$, and ii) for any fixed $u \in L$, the partial derivative $\partial d_L(u,v)/\partial v$ exists and is nonzero everywhere except at a finite set of locations $v$.
\end{definition}

We next comment on integration over linear networks. 
Fix a point $u\in L$. 
For an integrable real-valued function $f$, 
we have the following change-of-variables formula \citep[Proposition 1]{rakshit2017second}: 
\begin{eqnarray}\label{eq:intregulardist}
\int_L f(v) \de_1 v= \int_0^{\infty} \sum_{v \in L: d_L(u,v)=r} \frac{f(u)}{J_{d_L}(u,v)} \de r,
\end{eqnarray}
where $J_{d_L}(u,v)= \left| \partial d_L(u,v)/\partial v \right|$ is the Jacobian; there may be a finite collection of fixed points $u\in L$ such that \eqref{eq:intregulardist} does not hold. If there is no $u \in L$ such that $d_L(v,u)=r$, the sum on the right hand side of equation \eqref{eq:intregulardist} is 0. Extending equation \eqref{eq:intregulardist} to functions $f: L^m \to \R$, $m\geq1$, we have
\begin{align}\label{eq:intregulardistmult}
&\int_L \cdots \int_L f(u_1,\ldots,u_m) \de_1 u_1 \cdots \de_1 u_m =
\int_0^{\infty} \sum_{u_1 \in L:d_L(u_1,u)=r_1} \frac{1}{J_{d_L}(u_1,u)} \cdots
\\
&
\cdots
\int_0^{\infty} \sum_{u_m \in L:d_L(u_m,u)=r_m}
\frac{1}{J_{d_L}(u_m,u)} f(u_1,\ldots,u_m) \de r_1 \cdots \de r_m\nonumber
\\
&
=
\int_0^{\infty} \cdots
\int_0^{\infty}
\sum_{u_1 \in L:d_L(u_1,u)=r_1}
\cdots
 \sum_{u_m \in L:d_L(u_m,u)=r_m}
 \frac{f(u_1,\ldots,u_m)}{\prod_{i=1}^mJ_{d_L}(u_i,u)} \de r_1 \cdots \de r_m.\nonumber
\end{align}

Let $\mathcal{D}(u)= \max \{ d_L(u,v): v \in L \}$ be the farthest reachable distance from $u$ and $c_L(u,r)$ be the number of points exactly $r$ units away from $u$ according to the choice $d_L$. Then, for $ r\in(0,\mathcal{D}(u))$, consider 
\begin{eqnarray}
\tilde{J}_{d_L}(u,r)= \left[ \frac{1}{c_L(u,r)} \sum_{v \in L: d_L(v,u)=r} \frac{1}{J_{d_L}(u,v)} \right]^{-1},
\end{eqnarray}
which is the harmonic mean of $J_{d_L}(u,v)$ at all locations $v$ exactly $r$ units away from $u$ according to $d_L$. Moreover, the harmonic mean is zero if any $J_{d_L}(u,v)$ is zero. 
Defining 
$$
w_{d_L}(u,r)=
\frac{\tilde{J}_{d_L}(u,r)}{c_L(u,r)},
$$
and if the function $f$ in \eqref{eq:intregulardist} only depends on the distance $d_L(u,v)$, we obtain
\begin{eqnarray}\label{eq:intregulardistpairs}
\int_L f(v) \de_1 v = \int_L h(d_L(u,v)) w_{d_L}(u,d_L(u,v)) \de_1 v 
=
\int_0^{\mathcal{D}(u)} h(r) \de r,
\end{eqnarray}
where $h:[0, \infty) \to \R$. Hence, the equation above  can be extended to 
\begin{align}\label{eq:intregulardistpairsmult}
&\int_L \cdots \int_L h(d_L(u_1,u),\ldots,d_L(u_m,u)) \prod_{i=1}^{m} w_{d_L}(u,d_L(u_i,u)) \de_1 u_1 \cdots \de_1 u_m =
\\
&
=
\int_0^{\mathcal{D}(u)} 
\cdots
\int_0^{\mathcal{D}(u)} h(r_1,\ldots,r_m) \de r_1 \cdots \de r_m \nonumber
\end{align}
for $h:[0, \infty)^m \to \R$.

\subsubsection{Second-order intensity reweighted pseudostationarity and inhomogeneous linear network $K$-functions}

\citet{rakshit2017second} considered point processes for which the pair correlation function \eqref{eq:pairnetwork} only depends on the regular distance metric $d_L$, i.e.~$g(u,v)=\bar{g}(d_L(u,v))$ for some function $\bar g:[0,\infty)\to[0,\infty)$, and called such a point process {\em $d_L$-correlated}; with their notation,  {\em $\delta$-correlated} since they used the notation $\delta$ for $d_L$. This is exactly what \cite{ABN12} called {\em second-order reweighted pseudostationary} when $d_L$ is given by the shortest-path distance. For any inhomogeneous $d_L$-correlated point process $X$ and any $0\leq r < R$, where $R=\min_{u \in L} \mathcal{D}(u)$, \citet{rakshit2017second} proposed the following geometrically corrected inhomogeneous $K$-function
\begin{align}
\label{KinhomAng}
K_{\rm inhom}^L(r)
&=
\E_u^!\left[\sum_{x\in X}\frac{\1\{x\in b_L(u,r)\}}{\rho(x)}
w_{d_L}(u,d_L(u,x))\right]
\nonumber
\\
&=
\frac{1}{|L|}\E\left[\mathop{\sum\nolimits\sp{\ne}}_{x_1,x_2\in X}\frac{\1\{x_1\in W\} \1\{x_2\in b_L(x_1,r)\}}{\rho(x_1)\rho(x_2)}
w_{d_L}(x_1,d_L(x_1,x_2))
\right] \nonumber
\\
&=
\int_0^{r} \bar g(t)\de t,
\end{align}
for any $u\in L$. Note that here $\bar g: [0,\infty)\to [0,\infty)$. For a Poisson process on the linear network $L$, which has pair correlation $g(\cdot)=1$, 
we have $K(r)=r$; values larger than $r$ indicate clustering within (pairwise) inter-point distance $r$ while values smaller than $r$ instead reveal inhibition. Similarly, $g(r)>1$ indicates a clustering behaviour between $r$-separated point pairs while $g(r)<1$ points to inhibition.

\section{
Higher-order summary statistics
}\label{sec:method}
To quantify degrees of dependence of any order higher than two between points in a stationary point process in $\R^d$, three common and powerful tools are given by the empty space function, the nearest neighbour distance distribution function and the $J$-function 
\citep{bartlett1964spectral,paloheimo1971theory,diggle79, VB96}.
In a seminal paper, \cite{van11}  finally extended these summary statistics to inhomogeneous point processes in $\R^d$ and later \citet{Cronie2015,Cronie2016} extended them further to spatio-temporal and marked inhomogeneous point processes in $\R^d$. 
Our aim here is to study these higher order summary statistics and their non-parametric estimation in the context of linear network point processes. 

\subsection{The general case}\label{sec:GeneralSumStat}

Given some arbitrary (complete and separable metric) space $S$ and some locally finite Borel reference measure $\nu_S$, consider a point process $X$ in $S$ with product densities $\rho^{(m)}$ (with respect to products of $\nu_S$), $m\geq1$, and $\bar\rho=\inf_{u\in S}\rho(u)>0$. 
The summary statistics of \citet{van11} may now be expressed as follows. 
Given the closed ball $b_S(u,r)=\{v\in S:d_S(u,v)\leq r\}\subseteq S$ with centre $u\in S$ and radius $r\geq0$, which is based on some metric $d_S(\cdot,\cdot)$ on $S$,
let 
\begin{align}
\label{AbsConv}
\limsup_{m\to\infty}
\left(
\frac{\bar\rho^m}{m!}
\int_{b_S(u,r)^m}
g_m(u_1,\ldots,u_m)
\nu_S(du_1)\cdots\nu_S(du_n)
\right)^{1/m} < 1
\end{align}
and consider the \emph{inhomogeneous empty space function} (at $u\in S$) 
\begin{align}
\label{Finhom}
F_{\rm inhom}^S(r;u) 
&= 
1 - \E\left[
\prod_{x \in X} \left(
1 -
\frac{\bar\rho}{\rho(x)}
\1\{x\in b_S(u,r)\}
\right)
\right]
,
\end{align}
the {\em inhomogeneous nearest neighbour distance distribution function} (at $u\in S$) 
\begin{align}
\label{Hinhom}
H_{\rm inhom}^S(r;u) 
&=
1 - \E_u^!\left[
\prod_{x \in X} \left(
1 -
\frac{\bar\rho}{\rho(x)}
\1\{x\in b_S(u,r)\}
\right)
\right]
,
\end{align}
and the {\em inhomogeneous $J$-function} (at $u\in S$) 
\begin{align}
\label{Jinhom}
J_{\rm inhom}^S(r;u) 
&= 1 + \sum_{m=1}^{\infty}\frac{(-\bar\rho)^m}{m!}
\int_{b_S(u,r)^m}
\xi_{m+1}(u,u_1,\ldots,u_m)\nu_S(du_1)\cdots\nu_S(du_m).
\end{align}

Given $X$ within some bounded $W\subseteq S$, non-parametric estimators of \eqref{Finhom} and \eqref{Hinhom} based on $X\cap W$ are given by
\begin{align}
\label{FinhomLocal}
\widehat F_{\rm inhom}^S(r;u,X,W) 
&=
\prod_{x \in X\cap W} \left(
1 -
\frac{\bar\rho}{\rho(x)}
\1\{x\in b_S(u,r)\}
\right)
,
\quad u\in W,
\\
\label{HinhomLocal}
\widehat H_{\rm inhom}^S(r;u,X,W) 
&=
\prod_{x \in X\cap W\setminus\{u\}} \left(
1 -
\frac{\bar\rho}{\rho(x)}
\1\{x\in b_S(u,r)\}
\right)
,
\quad u\in X\cap W,
\end{align}
respectively; note that in practice we would plug in an estimate of $\rho(\cdot)$ into these estimators. These are local empirical summary statistics.

\subsection{The Euclidean case}
Besides providing the definitions of the summary statistics in \eqref{Finhom}, \eqref{Hinhom} and \eqref{Jinhom}, for which the intuition will be clarified in a moment, \citet[Theorem 1]{van11} showed that under certain conditions, when $S=\R^d$, $b_S(u,r)=b_{\R^d}(u,r)=\{x\in\R^d:\|x-u\|\leq r\}$, $u\in\R^d$, where $\|\cdot\|$ is the Euclidean norm, the functions in Section \ref{sec:GeneralSumStat} are almost everywhere constant as functions of $u\in S=\R^d$. We may thus write $H_{\rm inhom}^{\R^d}(r)$ and $F_{\rm inhom}^{\R^d}(r)$ for \eqref{Finhom} and  \eqref{Hinhom} to emphasize this indepencence of the choice of $u\in\R^d$. In addition, \citet[Theorem 1]{van11} also showed that \eqref{Jinhom} satisfies
\begin{align}\label{eq:Jinhom}
J_{\rm inhom}^{\R^d}(r;u)
=
J_{\rm inhom}^{\R^d}(r)
=
\frac{1-H_{\rm inhom}^{\R^d}(r)}{1-F_{\rm inhom}^{\R^d}(r)}
, \qquad r\geq0, 
\end{align}
for almost every $u\in \R^d$ and $F_{\rm inhom}^{\R^d}(r)\neq1$, and truncating the sum in \eqref{Jinhom} at $m=1$ here yields
\begin{align}
\label{Truncation}
J_{\rm inhom}^{\R^d}(r) 
&\approx 1 -\bar\rho
\int_{b_{\R^d}(o,r)}\xi_{2}(o,u)\de u
=1 - \bar\rho(\bar K_{\rm inhom}^{\R^d}(r) - |b_{\R^d}(o,r)|),
\end{align}
where e.g.~$o=(0,\ldots,0)\in\R^d$ and we recall $\bar K_{\rm inhom}^{\R^d}(r)$ from equation \eqref{KinhomBMWEuclidean}.


The main importance of \eqref{eq:Jinhom} is that it is an inhomogeneous analogue/natural extension of the $J$-function of \citet{VB96} for stationary point processes:
\begin{align}\label{eq:J}
J^{\R^d}(r)=\frac{1-H^{\R^d}(r)}{1-F^{\R^d}(r)}
=\frac{1-\P^!_o(X \cap b_{\R^d}(o,r) \neq \emptyset)}{1-\P(X \cap b_{\R^d}(o,r) \neq \emptyset)}
=\frac{\P^!_o(X \cap b_{\R^d}(o,r) = \emptyset)}{\P(X \cap b_{\R^d}(o,r) = \emptyset)}
, \quad r\geq0, 
\end{align}
where we recall that under stationarity $\P^!_o$ is chosen to represent the family $\{\P^!_u(\cdot):u\in \R^d\}$ since $\P^!_u$ is constant as a function of $u\in\R^d$. We emphasise that under stationarity, \eqref{eq:Jinhom} reduces to \eqref{eq:J}; note that here $\rho(u)=\bar\rho=\rho>0$ for any $u\in\R^d$. 
Noting that for a Poisson process $X$ we have $\P^!_u(X\in\cdot)=\P(X\in\cdot)$ for any $u$ by Slivniyak's theorem \citep{CSKWM13}, we see that for Poisson processes we have 
$J^{\R^d}(r)=1$, $r\geq0$; also $J_{\rm inhom}^{\R^d}(r)=1$ holds true for (inhomogeneous) Poisson processes. The interpretation is thus that we quantify how conditioning on there being a point of $X$ at some location increases/decreases the probability of seeing a further point within distance $r$. Hence, $J_{\rm inhom}^{\R^d}(r)>1$ indicate inhibition/regularity whereas these quantities being smaller than 1 indicate clustering/attraction between points of $X$ with inter-point distance at most $r$ -- in the inhomogeneous case this should be understood in the sense of having scaled away the individual intensity contributions of the points of $X$. 

One thing that has completely been left out of the discussion above is a discussion on the definition of stationarity. 
A point process $X=\{x_i\}_i^N$ in $\R^d$ being stationary means that its distribution satisfies $\P(\{x+y:x\in X\}\in\cdot)=\P(X\in\cdot)$ for any $y\in\R^d$.
In other words, its distribution is invariant under a family $\mathcal{T}$ of transformations on the spatial domain, which here is given by the (Euclidean) family $\mathcal{T}=\mathcal{T}_{\R^d}=\{T_y:\R^d\to\R^d\}_{y\in\R^d}$, $T_y x = x+y$, $x\in\R^d$, of translations/shifts. 
Note that stationarity is a very strong assumption which, among other things, implies homogeneity, i.e.~that $X$ has a constant intensity. 
Moreover, we have mentioned that i) 
\eqref{Finhom}, \eqref{Hinhom} and \eqref{Jinhom} being constant in $u\in S=\R^d$, and ii) the relation \eqref{eq:Jinhom} being satisfied, were proved under some conditions in \citet{van11}. 
More specifically, what \citet{van11} assumed was so-called intensity reweighted moment stationarity (IRMS) for the point process $X$. We say that a point process $X=\{x_i\}_i^N$ in $\R^d$ with correlation functions $g_m$, $m\geq1$, is IRMS if i) $\bar\rho=\inf_{u\in\R^d}\rho(u)>0$, and ii) $g_m(u_1,\ldots,u_m) = g_m(T_yu_1,\ldots,T_yu_m) = g_m(u_1+y,\ldots,u_m+y)$ for almost any $u_1,\ldots,u_m\in\R^d$, any $y\in\R^d$ and any $m\geq1$. In fact, the original definition of \citet{van11} states, equivalently, that the expansion terms $\xi_m$, $m\geq1$, in \eqref{Expansion} should be translation invariant in the above sense. In other words, all intensity reweighted factorial moments should be invariant under the transformations $\mathcal{T}_{\R^d}$. Note further that stationarity implies IRMS.

The proof of 
Theorem 1 in \citet{van11} exploits that $X$ is IRMS to obtain that 
\eqref{Finhom} and \eqref{Hinhom} are almost everywhere constant as functions of $u\in\R^d$. 
This in turn allows us to treat the distributions of \eqref{FinhomLocal} and \eqref{HinhomLocal} as constant with respect to $u$ and, as such, we may consider the estimators
\begin{align}
\label{EstRd}
\widehat F_{\rm inhom}^{\R^d}(r;X,W) 
&=
\frac{1}{N(I\cap W_{\ominus r})}
\sum_{u\in I\cap W_{\ominus r}}
\widehat F_{\rm inhom}^{\R^d}(r;u,X,W)
,
\\
\widehat H_{\rm inhom}^{\R^d}(r;X,W) 
&=
\frac{1}{N(X\cap W_{\ominus r})}
\sum_{u\in X\cap W_{\ominus r}}
\widehat H_{\rm inhom}^{\R^d}(r;u,X,W)
,\nonumber
\end{align}
where $I\subseteq W\subseteq \R^d$ is a fine grid and $W_{\ominus r}$ is the $r$-erosion of $W$.



\subsection{The linear network case}

In either of the two forms of distributional invariance appearing in the Euclidean setting above, we assume some form of distributional invariance with respect to a family of transformations. Hence, if we would like to consider different forms of distributional invariance for a point process $X$ on some linear network $L$, we would need to find a suitable family of transformations $\mathcal{T}$, given the distribution $P(\cdot)=\P(X\in\cdot)$. 
Formally, this means working in the setting of 
algebraic groups/geometric measure theory -- to reproduce the proof of \citet[Theorem 1]{van11} in the linear network setting we would have to exploit Haar measure based arguments, where our reference measure $\nu_L$ would be a Haar measure and the associated collection of transformations, $\mathcal{T}$, would have a group structure and act (transitively) on $L$. 
However, such a structure seems (to the best of our knowledge) to not be available; cf.~\citet{baddeley2017stationary}. 
As a consequence, the independence of $u$ in the summary statistics \eqref{Finhom}, \eqref{Hinhom} and \eqref{Jinhom} may not be attainable so when performing non-parametric estimation we may not be able to justify the type of averaging over points of $X\cap W$, $W\subseteq L$, that was considered in \eqref{EstRd}. This clearly poses a problem since in general we do not have access to repeated samples of $X\cap W$. 

Our solution to obtaining geometrically corrected summary statistics comes from combining expression \eqref{Truncation} with \eqref{KinhomBMW} and \eqref{KinhomAng}. More precisely, in the Euclidean setting the truncation \eqref{Truncation} contains the inhomogeneous $K$-function $\bar K_{\rm inhom}^{\R^d}(r)$ in \eqref{KinhomBMWEuclidean}, but taking the discussion in Section \ref{s:SecondOrder} into consideration, since we are dealing with linear networks we should instead have the geometrically corrected $K$-function $K_{\rm inhom}^L(r)$ in \eqref{KinhomAng} in the truncation. By looking closer at \eqref{Finhom}, \eqref{Hinhom} and \eqref{Jinhom}, and revisiting the results and proofs in \citet{van11}, we arrive at the definitions below. 

\begin{definition}\label{SumStatsL}
Given a point process $X$ on a linear network $L$, with product densities $\rho^{(m)}$, $m\geq1$, and $\bar\rho=\inf_{u\in L}\rho(u)>0$, the inhomogeneous geometrically corrected \emph{linear empty space function}, the {\em linear nearest neighbour distance distribution function} and the {\em linear $J$-function} at $u\in L$ and $r\geq0$, with respect to a regular distance metric $d_L$, are given by
\begin{align}
\label{FinhomL}
& F_{\rm inhom}^L(r;u) 
= 
1 - \E\left[
\prod_{x \in X} \left(
1 -
\frac{\bar\rho\1\{x\in b_L(u,r)\}w_{d_L}(u,d_L(u,x))}{\rho(x)}
\right)
\right]
\\
&=
-\sum_{m=1}^{\infty}
\frac{(-\bar\rho)^m}{m!}
\int_{b_L(u,r)^m}
g_m(u_1,\ldots,u_m)
\prod_{i=1}^m w_{d_L}(u,d_L(u,u_i))
\de_1u_1\cdots\de_1u_m,
\nonumber
\\
\label{HinhomL}
&H_{\rm inhom}^L(r;u) 
= 
1 - \E_u^!\left[
\prod_{x \in X} \left(
1 -
\frac{\bar\rho\1\{x\in b_L(u,r)\}w_{d_L}(u,d_L(u,x))}{\rho(x)}
\right)
\right] 
\\
&=
-\sum_{m=1}^{\infty}
\frac{(-\bar\rho)^m}{m!}
K_{\rm inhom}^{L,m}(r;u)\nonumber
,
\\
\label{JinhomL}
&J_{\rm inhom}^L(r;u) 
= 1 + \sum_{m=1}^{\infty}\frac{(-\bar\rho)^m}{m!}
\times
\\
&\times
\int_{b_L(u,r)^m}
\xi_{m+1}(u,u_1,\ldots,u_m)
\prod_{i=1}^m w_{d_L}(u,d_L(u,u_i))
\de_1u_1\cdots\de_1u_m,
\nonumber
\end{align}
respectively, where
\begin{align*}
    K_{\rm inhom}^{L,m}(r;u)
    &=
    \E_u^!\left[
\mathop{\sum\nolimits\sp{\ne}}_{x_1,\ldots,x_m\in X}
\prod_{i=1}^m
\frac{\1\{x_i\in b_L(u,r)\}w_{d_L}(u,d_L(u,x_i))}{\rho(x_i)}
\right]
\\
&=
\int_{b_L(u,r)^m}
g_{m+1}(u,u_1,\ldots,u_m)
\prod_{i=1}^m w_{d_L}(u,d_L(u,u_i))
\de_1u_1\cdots\de_1u_m
;
\end{align*}
note that $m=1$ gives us \eqref{KinhomAng}.
\end{definition}


The expansion in \eqref{FinhomL} follows from an application of the Campbell formula, and the expansions in \eqref{HinhomL} follow from a combination of the Campbell formula and the Campbell-Mecke formula \citep{van11,Cronie2015}.

The missing ingredient is still some form of distributional invariance, similar to IRMS in the Euclidean context, where the essence is that the correlation functions should only depend on the inter-point distances of the points. Intuitively, we would translate this idea to the linear network setting by letting the correlation functions only depend on the $d_L$-distances between the input points, i.e.
\begin{align}
\label{IRMPSweak}
g_m(u_1,\ldots,u_m)
=
\widetilde g_m(\{d_L(u_i,u_j): i,j=1,\ldots,m, i\neq j\}), \quad u_1,\ldots,u_m\in L, 
\end{align}
for some family of functions $\widetilde g_m$, $m\geq1$; requiring this to hold for (only/at least) $m\leq 2$ would essentially yield the definition of second-order reweighted pseudostationarity in \citet{rakshit2017second}. 
It turns out that this is not completely sufficient and we have to impose the slightly stronger condition that $g_m(u_1,\ldots,u_m)$ is given by a function of the distances $d_L(u,u_i)$, $i=1,\ldots,m$, where $u\in L$ is arbitrary.
Note that if the metric $d_L$ is independent of a chosen origin, the two concepts coincide. 
By additionally assuming homogeneity here we obtain an expression pertaining to the product densities and thereby a definition of moment stationarity for linear network point processes. If we here also assume that the moments characterise the whole distribution of the point process we obtain a definition of (pseudo)stationarity for linear network point processes. Note that we have chosen the names below in keeping with \citet{van11}  and \citet{ABN12}.

\begin{definition}\label{IRMPS}
Let $X$ be a point process on a linear network $L$ and let $d_L$ be a regular distance metric on $L$. 
Given some $k\geq2$, whenever the product densities $\rho^{(m)}$, $1\leq m\leq k$, exist, $\bar\rho=\inf_{u\in L}\rho(u)>0$ and for any $m\in\{2,\ldots,k\}$ the correlation function $g_m:L^m\to[0,\infty)$ satisfies
\begin{align}
    \label{IRMPSstrong}
g_m(u_1,\ldots,u_m)
=
\bar g_m(d_L(u,u_1),\ldots,d_L(u,u_m))
\end{align}
for any $u\in L$ and some function
$\bar g_m:[0,\infty)^m\to[0,\infty)$, 
we say that $X$ is {\em $k$-th order intensity reweighted pseudostationary (with respect to $d_L$)}; 
when this holds for any order $k\geq2$ we say that $X$ is {\em intensity reweighted moment pseudostationary (IRMPS)}.

If $X$ is both homogeneous with $\rho(u)=\bar\rho=\rho>0$, $u\in L$, and $k$-th order intensity reweighted pseudostationary, we say that it is {\em $k$-th order pseudostationary}; note that 
\(
\rho^{(m)}(u_1,\ldots,u_m)
=
\bar\rho^{(m)}(d_L(u,u_1),\ldots,d_L(u,u_m)) 
\)
for any $u\in L$ here. When this holds for any $k\geq1$ we say that $X$ is {\em moment pseudostationary}. 
Finally, any moment pseudostationary $X$ such that the (factorial) moments completely and uniquely characterise its distribution may be referred to as (strongly) {\em pseudostationary}.
\end{definition}

Note that 
the last condition in the definition above, i.e.~that the moments completely and uniquely characterise the distribution of $X$, may be obtained by requiring that the conditions in \citet[Section 2]{Zessin} hold. It is worth mentioning that (to the best of our knowledge) our definition of pseudostationarity is the first version of some form of ``strong stationarity" for linear network point processes to be provided in the literature. 


Next, we state our main result, which is essentially a (geometrically corrected) linear network version of \citet[Theorem 1]{van11}; its proof can be found in the Appendix.

\begin{thm}\label{ThmRepresentation}
For any IRMPS point process $X$ on a linear network $L$, which also satisfies \eqref{AbsConv} with $S=L$, the summary statistics in Definition \ref{SumStatsL} satisfy  
\begin{align*}
    F_{\rm inhom}^L(r;u)&=F_{\rm inhom}^L(r)
= -\sum_{m=1}^{\infty}
\frac{(-\bar{\rho})^m}{m!}
\int_0^r \cdots \int_0^r \bar g_{m}(t_1,\ldots,t_m)\de t_1 \cdots \de t_m
,
\\
H_{\rm inhom}^L(r;u)
&= H_{\rm inhom}^L(r)= -\sum_{m=1}^{\infty} \frac{(-\bar{\rho})^m}{m!}
\int_0^{r}
\cdots
\int_0^{r}
\bar g_{m+1}(0,t_1,\ldots,t_m)
\de t_1\cdots\de t_m
,
\\
J_{\rm inhom}^L(r;u)
&= 
J_{\rm inhom}^L(r)
=
\frac{1-H_{\rm inhom}^L(r)}{1-F_{\rm inhom}^L(r)},
\end{align*}
for (almost) any $u\in L$. 
\end{thm}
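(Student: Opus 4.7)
I would mirror the proof of Theorem~1 in \citet{van11}, replacing Euclidean translation-invariance arguments by repeated applications of the change-of-variables formula \eqref{eq:intregulardistpairsmult} for regular distance metrics. Three statements must be established: the closed-form series representation of $F_{\rm inhom}^L(r;u)$ (forcing independence of $u$); the analogous representation of $H_{\rm inhom}^L(r;u)$; and the ratio identity for $J_{\rm inhom}^L(r)$.

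\textbf{The $F$- and $H$-representations.} Starting from the series in \eqref{FinhomL}, IRMPS allows me to substitute $g_m(u_1,\ldots,u_m)=\bar g_m(d_L(u,u_1),\ldots,d_L(u,u_m))$ into the $m$-th summand. The integrand then depends on $(u_1,\ldots,u_m)$ only through the distances $d_L(u,u_i)$, and carries precisely the geometric-correction weights $\prod_{i=1}^m w_{d_L}(u,d_L(u,u_i))$ required by \eqref{eq:intregulardistpairsmult}. The constraint $u_i\in b_L(u,r)$ translates to $d_L(u,u_i)\leq r$, so \eqref{eq:intregulardistpairsmult} applied with $h(t_1,\ldots,t_m)=\bar g_m(t_1,\ldots,t_m)\prod_{i=1}^m \1\{t_i\leq r\}$ collapses each network integral to the iterated Lebesgue integral claimed; independence of $u$ is then automatic. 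The same argument handles $K_{\rm inhom}^{L,m}(r;u)$, via $g_{m+1}(u,u_1,\ldots,u_m)=\bar g_{m+1}(0,d_L(u,u_1),\ldots,d_L(u,u_m))$ (using that $d_L(u,u)=0$), and thereby yields the stated representation of $H_{\rm inhom}^L(r)$.

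\textbf{The $J$-identity.} A preliminary observation is that each $\xi_k$ inherits, through the recursive inversion of \eqref{Expansion}, the same $d_L(u,\cdot)$-functional form as the $g_k$; the same change-of-variables argument therefore shows that $J_{\rm inhom}^L(r;u)$ in \eqref{JinhomL} is independent of $u$. The ratio identity is equivalent to $(1-F_{\rm inhom}^L(r))\cdot J_{\rm inhom}^L(r) = 1-H_{\rm inhom}^L(r)$. To derive it, I would expand $g_{m+1}(u,u_1,\ldots,u_m)$ via \eqref{Expansion} over partitions of $\{0,1,\ldots,m\}$ and split each partition according to the block $D_\star$ that contains the distinguished index $0$: the $\xi_{|D_\star|}$-factor ultimately produces the $J$-series, whereas the product of the remaining $\xi_{|D|}$-factors, summed over partitions of the leftover indices, reproduces a $g_k$-term and hence the $F$-series. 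After applying \eqref{eq:intregulardistpairsmult}, the resulting double series is recognised as the Cauchy product of the $F$- and $J$-series and the identity follows.

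\textbf{Main obstacle.} The $F$- and $H$-steps are essentially direct applications of IRMPS plus \eqref{eq:intregulardistpairsmult}; the real work lies in the $J$-step, where the combinatorial rearrangement of \eqref{Expansion} into a Cauchy product, and the justification of all exchanges of limits, sums and integrals, must be carried out carefully. The absolute-convergence hypothesis \eqref{AbsConv} is what licenses these manipulations via Fubini--Tonelli. A secondary subtlety worth flagging is that \eqref{IRMPSstrong} must hold uniformly in the choice of base point $u$; this is precisely why Definition~\ref{IRMPS} imposes the stronger form of IRMPS rather than the weaker \eqref{IRMPSweak}, so no further hypothesis is needed.
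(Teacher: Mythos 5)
Your proposal follows essentially the same route as the paper's proof: substitute the IRMPS form of $g_m$ (resp.\ $g_{m+1}$ with $d_L(u,u)=0$) into the series expansions \eqref{FinhomL} and \eqref{HinhomL} and collapse each network integral via \eqref{eq:intregulardistpairsmult}, then deduce the ratio identity for $J_{\rm inhom}^L$. The only difference is that you spell out the partition/Cauchy-product argument for the $J$-step, which the paper simply delegates to the proof of Theorem~1 in \citet{van11}; your sketch of that argument is correct.
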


When $X$ is pseudostationary we immediately obtain the following corollary, upon noting that the intensity is constant.
\begin{cor}
Let $X$ be pseudostationary with constant intensity $\rho>0$. Then we obtain 
(cf.~\citet[p.~186]{van11})
\begin{align*}
    F_{\rm inhom}^L(r)&=F^L(r)
= -\sum_{m=1}^{\infty}
\frac{(-1)^m}{m!}
\int_0^r \cdots \int_0^r \bar\rho^{(m)}(t_1,\ldots,t_m)\de t_1 \cdots \de t_m
,
\\
H_{\rm inhom}^L(r)
&= H^L(r)
= -\sum_{m=1}^{\infty} \frac{(-1)^m}{m!}
\int_0^{r}
\cdots
\int_0^{r}
\frac{\bar\rho^{(m+1)}(0,t_1,\ldots,t_m)}{\rho}
\de t_1\cdots\de t_m
,
\\
J_{\rm inhom}^L(r)
&= 
J^L(r)
=
\frac{1-H^L(r)}{1-F^L(r)}.
\end{align*}
\end{cor}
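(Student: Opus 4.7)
The plan is to adapt the proof of \citet[Theorem 1]{van11} to the linear network setting, using the change-of-variables formula \eqref{eq:intregulardistpairsmult} as the geometric substitute for the group-theoretic arguments (Haar measure, translation invariance) available in $\R^d$.

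First I would expand the products inside the expectations defining $F_{\rm inhom}^L(r;u)$ and $H_{\rm inhom}^L(r;u)$ using the identity
\[
\prod_{x \in X}(1 - f(x)) = 1+\sum_{m=1}^{\infty} \frac{(-1)^m}{m!}\mathop{\sum\nolimits\sp{\ne}}_{x_1,\ldots,x_m \in X}\prod_{i=1}^m f(x_i),
\]
with $f(x) = \bar\rho \,\1\{x \in b_L(u,r)\}\, w_{d_L}(u,d_L(u,x))/\rho(x)$. The absolute convergence hypothesis \eqref{AbsConv} justifies interchanging the expectation with the infinite sum via Fubini, reproducing the series expansions already displayed inside Definition \ref{SumStatsL}. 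Applying Campbell's formula \eqref{eq:product} to the $F$-side, and the reduced Campbell--Mecke formula \eqref{RedCM} to the $H$-side, converts each factorial-moment expectation into an integral against $\rho^{(m)}$ or $\rho^{(m+1)}(u,\cdot,\ldots,\cdot)/\rho(u)$, which in turn become integrals against $g_m(u_1,\ldots,u_m)$ and $g_{m+1}(u,u_1,\ldots,u_m)$ respectively.

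Next I would invoke IRMPS, i.e.\ identity \eqref{IRMPSstrong}, to replace $g_m(u_1,\ldots,u_m)$ by $\bar g_m(d_L(u,u_1),\ldots,d_L(u,u_m))$ and $g_{m+1}(u,u_1,\ldots,u_m)$ by $\bar g_{m+1}(0,d_L(u,u_1),\ldots,d_L(u,u_m))$, where the choice of base point $u$ is the same one appearing in the weights $w_{d_L}(u,d_L(u,u_i))$. The integrands then depend on $u_1,\ldots,u_m$ only through $d_L(u,u_i)$, so \eqref{eq:intregulardistpairsmult} applies with $h(t_1,\ldots,t_m)=\bar g_m(t_1,\ldots,t_m)\prod_i\1\{t_i\le r\}$ (respectively $\bar g_{m+1}(0,t_1,\ldots,t_m)\prod_i\1\{t_i\le r\}$), collapsing each integral to $\int_0^r\!\cdots\!\int_0^r$. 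This simultaneously eliminates the ball $b_L(u,r)$ in favour of the cube $[0,r]^m$ and removes all $u$-dependence, yielding the asserted forms of $F_{\rm inhom}^L(r)$ and $H_{\rm inhom}^L(r)$.

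For the factorisation $J_{\rm inhom}^L(r) = (1-H_{\rm inhom}^L(r))/(1-F_{\rm inhom}^L(r))$, I would work at the level of formal power series in $\bar\rho$. The $F$-expansion in Definition \ref{SumStatsL} and the definition \eqref{JinhomL} of $J_{\rm inhom}^L$ suggest multiplying out the product $(1-F_{\rm inhom}^L(r;u))\, J_{\rm inhom}^L(r;u)$ and showing that the resulting Cauchy-product coefficients agree with those in the $H$-expansion \eqref{HinhomL}. Using the partition expansion \eqref{Expansion} to decompose $g_{m+1}(u,u_1,\ldots,u_m)$ into sums over partitions of $\{0,1,\ldots,m\}$ (where the block containing $0$ carries $u$ and the remaining blocks do not) turns the $H$-integrand into a sum of products of $\xi$-factors; because the geometric-correction weights factor as $\prod_{i=1}^m w_{d_L}(u,d_L(u,u_i))$, they split cleanly along the blocks of each partition. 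Matching the two sides term by term gives $(1-F_{\rm inhom}^L)J_{\rm inhom}^L=1-H_{\rm inhom}^L$; \eqref{AbsConv} again guarantees absolute convergence, so no formal-series subtleties arise.

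The main obstacle I anticipate is this last combinatorial step: showing that integration of the partition expansion against the weights $\prod_i w_{d_L}(u,d_L(u,u_i))$ reproduces the Cauchy product of the $F$- and $\xi$-series. Once one verifies that the weights distribute over partitions in exactly the same way as Lebesgue measure does in the Euclidean case of \citet{van11}, the proof of the $J$-identity reduces to his combinatorial bookkeeping and the $u$-independence of $J_{\rm inhom}^L$ follows immediately from that of $F_{\rm inhom}^L$ and $H_{\rm inhom}^L$.
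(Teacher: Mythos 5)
Your proposal is, in substance, a re-derivation of Theorem~\ref{ThmRepresentation}: the expansion of the product into factorial-moment sums, the interchange justified by \eqref{AbsConv}, the application of the Campbell and reduced Campbell--Mecke formulas, the substitution of \eqref{IRMPSstrong}, and the collapse of the ball integrals via \eqref{eq:intregulardistpairsmult} are exactly the steps of the paper's appendix proof of that theorem, which likewise defers the combinatorics of the $J$-identity to \citet[Theorem 1]{van11}. That part is sound. The problem is that the statement you were asked to prove is the \emph{corollary}, whose entire content is the specialisation to a pseudostationary process with constant intensity, and your argument never invokes that hypothesis: it terminates with $F^L$ and $H^L$ expressed through $(-\bar\rho)^m\bar g_m(t_1,\ldots,t_m)$ and $(-\bar\rho)^m\bar g_{m+1}(0,t_1,\ldots,t_m)$, not through the functions $\bar\rho^{(m)}$ appearing in the statement.

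The missing step is short but it is the whole point of the corollary: under pseudostationarity $\rho(\cdot)\equiv\rho=\bar\rho>0$, so $g_m=\rho^{(m)}/\rho^m$ and hence $\bar g_m(t_1,\ldots,t_m)=\bar\rho^{(m)}(t_1,\ldots,t_m)/\rho^m$, which turns $(-\bar\rho)^m\bar g_m(t_1,\ldots,t_m)$ into $(-1)^m\bar\rho^{(m)}(t_1,\ldots,t_m)$ and $(-\bar\rho)^m\bar g_{m+1}(0,t_1,\ldots,t_m)$ into $(-1)^m\bar\rho^{(m+1)}(0,t_1,\ldots,t_m)/\rho$; the $J$-relation is inherited unchanged. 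This is precisely how the paper obtains the result, namely as an immediate consequence of Theorem~\ref{ThmRepresentation} ``upon noting that the intensity is constant.'' So you should either cite the theorem and supply only this substitution, or append the substitution to your (otherwise redundant) re-proof; as written, the proposal proves the wrong statement and stops one line short of the asserted one.
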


It should further be noted that under homogeneity, and thereby under pseudostationarity, we have
\begin{align*}
& F^L(r) 
= 
1 - \E\left[
\prod_{x \in X} \left(
1 -
\1\{x\in b_L(u,r)\}w_{d_L}(u,d_L(u,x))
\right)
\right],
\\
&H^L(r) 
= 
1 - \E_u^!\left[
\prod_{x \in X} \left(
1 -
\1\{x\in b_L(u,r)\}w_{d_L}(u,d_L(u,x))
\right)
\right],
\end{align*}
for any arbitrary $u$, 
since $\rho(\cdot)\equiv\rho>0$. 




We next look closer at how our summary statistics are affected by independent thinning. The proof of Lemma \ref{LemmaThinning} can be found in the Appendix.

\begin{lemma}\label{LemmaThinning}
Let $X$ be IRMPS and let 
$X_{th}$ be an independently thinned version of $X$, generated through some measurable retention probability function $p:L \to (0,1]$.
The inhomogeneous geometrically corrected linear empty space and linear nearest neighbour distance distribution functions for $X_{th}$ are of the form
\begin{align*}
    F_{\rm inhom}^{L,th}(r)
&= -\sum_{m=1}^{\infty}
\frac{(-\bar{\rho}\bar{p})^m}{m!}
\int_0^r \cdots \int_0^r \bar g_{m}(t_1,\ldots,t_m)\de t_1 \cdots \de t_m
,
\\
H_{\rm inhom}^{L,th}(r)&= -\sum_{m=1}^{\infty} \frac{(-\bar{\rho}\bar{p})^m}{m!}
\int_0^{r}
\cdots
\int_0^{r}
\bar g_{m+1}(0,t_1,\ldots,t_m)
\de t_1\cdots\de t_m
,
\end{align*}
where 
$\bar{p}=\inf_{u\in L}p(u)>0$. 
We further have that
\begin{align*}
F_{\rm inhom}^{L,th}(r) 
&= 
1 - \E\left[
\prod_{x \in X_{th}} \left(
1 -
\frac{\bar{\rho}\bar{p}\1\{x\in b_L(u,r)\}w_{d_L}(u,d_L(u,x))}{p(x)\rho(x)}
\right)
\right]
\\
&=
1 - \E\left[
\prod_{x \in X} \left(
1 -
p(x)\frac{\bar\rho\1\{x\in b_L(u,r)\}w_{d_L}(u,d_L(u,x))}{\rho(x)}
\right)
\right]
\end{align*}
and
\begin{align*}
H_{\rm inhom}^{L,th}(r) 
&= 
1 - \E_u^!\left[
\prod_{x \in X_{th}} \left(
1 -
\frac{\bar{\rho}\bar{p}\1\{x\in b_L(u,r)\}w_{d_L}(u,d_L(u,x))}{p(x)\rho(x)}
\right)
\right]
\\
&=
1 - \E_u^!\left[
\prod_{x \in X} \left(
1 -
p(x)\frac{\bar\rho\1\{x\in b_L(u,r)\}w_{d_L}(u,d_L(u,x))}{\rho(x)}
\right)
\right].
\end{align*}
\end{lemma}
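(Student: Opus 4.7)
The argument rests on two classical facts about independent thinning. First, by conditioning on $X$ and using the independent retention indicators $\{e_x\}_{x\in X}$, the product densities of $X_{th}$ factor as $\rho^{(m)}_{th}(u_1,\ldots,u_m)=p(u_1)\cdots p(u_m)\rho^{(m)}(u_1,\ldots,u_m)$, so the correlation functions are preserved: $g_m^{th}=g_m$. Since IRMPS is a property of the correlation functions alone, $X_{th}$ inherits it from $X$ with the \emph{same} radial functions $\bar g_m$. Second, for any non-negative measurable $f\colon L\to[0,\infty)$ the same conditioning gives the factorisation identity
\[
\E\Bigl[\prod_{x\in X_{th}}f(x)\Bigr]
=\E\Bigl[\prod_{x\in X}\bigl(1-p(x)+p(x)f(x)\bigr)\Bigr].
\]

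With these ingredients, the series expansions of $F_{\rm inhom}^{L,th}$ and $H_{\rm inhom}^{L,th}$ follow by applying Theorem \ref{ThmRepresentation} to $X_{th}$, taking the lower-bound constant to be $\bar\rho\bar p\leq\inf_L(p\rho)$; the absolute convergence condition \eqref{AbsConv} for $X_{th}$ at this level is inherited from the one on $X$ since $g_m^{th}=g_m$ and $\bar\rho\bar p\leq\bar\rho$. Since the radial correlation functions of $X_{th}$ coincide with those of $X$, the expansions from Theorem \ref{ThmRepresentation} produce exactly the claimed formulas. The first product-expectation representation in each of the two displays is then simply Definition \ref{SumStatsL} instantiated for $X_{th}$ (with intensity $\rho_{th}=p\rho$, so that $p(x)\rho(x)=\rho_{th}(x)$ appears in the denominator, and constant $\bar\rho\bar p$); its equivalence with the series follows by the same Campbell-formula expansion that proves Theorem \ref{ThmRepresentation}, now applied to $X_{th}$.

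The $X$-based representations on the bottom lines are obtained by invoking the factorisation identity of the first paragraph. For $F_{\rm inhom}^{L,th}$, with $f(x)=1-\bar\rho\bar p\,\1\{x\in b_L(u,r)\}w_{d_L}(u,d_L(u,x))/(p(x)\rho(x))$ the algebra $1-p(x)+p(x)f(x)$ cancels the $p(x)$ from the denominator and returns the desired $X$-based form. For $H_{\rm inhom}^{L,th}$ one needs the reduced-Palm analogue: under $\P^{!,th}_u$, the process $X_{th}$ has the distribution of the independent thinning (with the same retention function $p$) of $X^!_u$. Once this is in hand, applying the factorisation identity under $\P^!_u$ yields the $H$ identity.

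The main technical step is precisely this commutativity of reduced Palm conditioning with independent thinning on a linear network. It is established by a short computation that combines the reduced Campbell-Mecke formula \eqref{RedCM} for $X_{th}$ (exploiting $\rho_{th}=p\rho$) with the same formula for $X$: writing a generic Palm integral $\int_L\E^{!,th}_u[g(u,X_{th})]\rho_{th}(u)\de_1u$ via the sum $\E[\sum_{x\in X_{th}}g(x,X_{th}\setminus\{x\})]=\E[\sum_{x\in X}e_x g(x,X_{th}\setminus\{x\})]$, conditioning on $X$ and the $e_x$'s, and re-applying Campbell-Mecke for $X$ identifies the two Palm laws for $\rho_{th}$-a.e.~$u$. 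Beyond this, the only care needed is the interchange of sum and integral throughout the expansions, which is justified by the inherited absolute convergence noted above.
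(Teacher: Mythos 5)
Your route is essentially the paper's: you establish (i) that independent thinning preserves the correlation functions, so $X_{th}$ is IRMPS with the same $\bar g_m$, and (ii) the generating-functional identity $\E[\prod_{x\in X_{th}}f(x)]=\E[\prod_{x\in X}(1-p(x)+p(x)f(x))]$, which the paper simply cites from Daley and Vere-Jones (their equation (11.3.2)) and you derive by conditioning on $X$ and the retention indicators. You go beyond the paper in one useful respect: you state explicitly, and correctly sketch via the reduced Campbell--Mecke formula, that reduced Palm conditioning commutes with independent thinning (under $\P_u^{!}$ for the thinned process, $X_{th}$ is the $p$-thinning of $X_u^!$); the paper's appeal to ``the same generating functionals'' for the Palm process leaves this entirely implicit.

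There is, however, one step that does not close as you assert it. With $f(x)=1-\bar\rho\bar p\,\1\{x\in b_L(u,r)\}w_{d_L}(u,d_L(u,x))/(p(x)\rho(x))$ the algebra gives
\[
1-p(x)+p(x)f(x)=1-\frac{\bar\rho\,\bar p\,\1\{x\in b_L(u,r)\}\,w_{d_L}(u,d_L(u,x))}{\rho(x)},
\]
with the constant $\bar p$ in the numerator, not the function value $p(x)$ that appears in the lemma's printed $X$-based representation. The two product representations in the statement are mutually consistent, and consistent with the series (which carries $\bar p^{\,m}$ outside the integrals), only when $p$ is constant: for general $p$ the printed form $1-\E[\prod_{x\in X}(1-p(x)\bar\rho\1\{\cdot\}w_{d_L}(\cdot)/\rho(x))]$ expands to $-\sum_m\frac{(-\bar\rho)^m}{m!}\int\prod_i p(u_i)\,w_{d_L}(u,d_L(u,u_i))\,\bar g_m\,\de_1u_1\cdots\de_1u_m$, a different quantity. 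You write that the cancellation ``returns the desired $X$-based form'' without performing it; carrying it out exposes the mismatch. (The paper's own proof has the mirror-image problem: it evaluates the generating functional of $X_{th}$ at the \emph{unthinned} function $h(x)=1-\bar\rho\1\{\cdot\}w_{d_L}(\cdot)/\rho(x)$, which yields the printed $p(x)$-form but then fails to reproduce the stated series and $X_{th}$-product.) What your argument actually proves is the series, the $X_{th}$-product, and the $X$-product with $\bar p$ in place of $p(x)$; you should state that explicitly rather than claim the printed identity.
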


\subsubsection{Poisson processes}
Poisson processes serve many different purposes in spatial statistics and one of them is as benchmark for complete randomness. 

The first thing we note is that for any Poisson process with intensity $\rho(u)$, $u\in L$, $\bar\rho=\inf_{u\in L}\rho(u)>0$, we have $\xi_m(\cdot)=0$ and $g_m(\cdot)=1$, $m\geq2$, since $\rho^{(n)}(u_1,\ldots,u_n)=\rho(u_1)\cdots\rho(u_1)$. Hence, it is automatically IRMPS and the series expansions in the expressions for $F_{\rm inhom}^L(r;u)$ and $H_{\rm inhom}^L(r;u)$ are just the Taylor expansions of $1-\exp\{-\bar\rho r\}$, so $J_{\rm inhom}^L(r;u) = 1$. In particular, for a homogeneous Poisson process with constant intensity $\rho>0$, $F_{\rm inhom}^L(r;u)=H_{\rm inhom}^L(r;u)=1-\exp\{-\rho r\}$.
Note further that for a linear network $L$ which is isometric to $\R$ (i.e.~$\R$ bent in a number of places), $F_{\rm inhom}^L(r;u)=H_{\rm inhom}^L(r;u)=1-\exp\{-\rho r\} \neq 1-\exp\{-2\rho r\} = F_{\rm inhom}^{\R}(r;u) = H_{\rm inhom}^{\R}(r;u)$ for a homogeneous (and thus pseudostationary) Poisson process.

\subsubsection{Log-Gaussian Cox processes}
In the Euclidean context, log-Gaussian Cox processes \citep{moller1998lgcp} are the most prominent clustering models. We next look closer at IRMPS log-Gaussian Cox processes.

Assume that there exists a well-defined covariance function which satisfies  $C(u_1,u_2)=\mathcal{C}(d_L(u,u_1),d_L(u,u_2))\in\R$, $u_1,u_2\in L$, for any $u\in L$ and some function $\mathcal{C}$; 
note that the dependence associated to two locations is determined by the locations' respective $d_L$-distances to some arbitrary point $u\in L$. 


Let $X$ be a log-Gaussian Cox process with (a.s.~locally finite) random intensity measure $\Gamma(A)=\int_A\Lambda(v)\de_1v=\int_A\exp\{Z(v)\}\de_1v$, $A\subseteq L$, where $Z$ is a Gaussian random field on $L$ with mean function $\mu(v)\in\R$, $v\in L$, and 
covariance function $C(\cdot)$.

Note first that by e.g.~\citet[Example 5.3]{CSKWM13},
\begin{align*}
\rho^{(m)}(u_1,\ldots,u_m)
&=\E[\Lambda(u_1)\cdots \Lambda(u_m)]
=\E\left[\exp\left\{\sum_iZ(u_i)\right\}\right]
\\
&
=
\prod_i\rho(u_i)\prod_{1\leq i < j \leq m}g(u_i,u_j),
\end{align*}
whereby
\begin{align*}
g_m(u_1,\ldots,u_m)
&=
\frac{\rho^{(m)}(u_1,\ldots,u_m)}{\rho(u_1)\cdots\rho(u_m)}
=
\prod_{1\leq i < j \leq m}g(u_i,u_j)
.
\end{align*}
Now, due to the way the covariance function is defined, 
\begin{align*}
\rho(v)&=
\rho^{(1)}(v)
=
\E\left[\exp\left\{Z(v)\right\}\right]
=
\exp \left\{ \mu(v)+
\mathcal{C}(d_L(u,v),d_L(u,v))/2 \right\},
\end{align*}
and
\begin{align*}
g(u_1,u_2)
&=\exp \{ C(u_1,u_2) \}=\exp \{ \mathcal{C}(d_L(u,u_1),d_L(u,u_2))\},
\end{align*}
so 
\begin{align*}
g_m(u_1,\ldots,u_m)
&=
\prod_{1\leq i<j \leq m}\exp\left\{\mathcal{C}(d_L(u,u_i),d_L(u,u_j))\right\}
\\
&=
\exp\left\{
\sum_{1\leq i<j\leq m}\mathcal{C}(d_L(u,u_i),d_L(u,u_j))\right\},
\end{align*}
and we see that the latter is a function of the form $\bar g_m(d_L(u,u_1),\ldots,d_L(u,u_m))$, which by assumption does not change with $u$. Hence, $X$ is IRMPS.

\subsection{Non-parametric estimation}
We next turn to the non-parametric estimation of our newly defined summary statistics, based on an IRMPS point process $X$ on a linear network $L$; note that in practice $L$ is often a subnetwork of a larger network and the observed point pattern is assumed to be a realisation of $X$. In analogy with \citet{van11,Cronie2015,Cronie2016} we will focus on minus sampling estimators. 

Recalling that the boundary $\partial L$ of a linear network $L$ is the set of all nodes with degree 1, given a regular distance $d_L$, we define the $r$-erosion of $L$ (or simply the $r$-reduced network), $r\geq0$, as 
\begin{eqnarray*}
 L_{\ominus r}=
 \{u \in L: 
 d_L(u,\partial L) 
 \geq r \}, 
\end{eqnarray*}
where $d_L(u,A)=\inf_{v\in A}d_L(u,v)$, $A\subseteq L$, $u\in L$.
%
%
We further let $I$ be a set consisting of a large number of fixed points in $L$; 
this is analogous to a point grid in $\R^d$. 

For a given $r\geq0$, we estimate $F_{\rm inhom}^L(r)$, i.e.~the inhomogeneous geometrically corrected linear empty space function evaluated at $r$, by means of
\begin{align}\label{eq:estFinhom}
\widehat F_{\rm inhom}^L(r)
=
1 - \frac{1}{N(I \cap L_{\ominus r})}
\sum_{u \in I \cap L_{\ominus r} } 
\prod_{x \in X \cap b_L(u,r)} 
\left( 1- \frac{\bar{\rho}}{\rho(x)} w_{d_L}(u,d_L(u,x))\right),
\end{align}
and $H_{\rm inhom}^L(r)$, i.e.~the inhomogeneous geometrically corrected linear nearest neighbour distance distribution function at $r$, by means of 
\begin{align}\label{eq:estHinhom}
\widehat H_{\rm inhom}^L(r)
=
1-
\frac{1}{N(X \cap L_{\ominus r})} 
\sum_{u \in X \cap L_{\ominus r} } 
\prod_{x \in X\setminus\{u \} \cap b_L(u,r)} 
\left( 1- \frac{\bar{\rho}}{\rho(x)} w_{d_L}(u,d_L(u,x))\right).
\end{align}
Having the estimators above at hand, we then estimate $J_{\rm inhom}^L(r)$, i.e.~the inhomogeneous geometrically corrected linear $J$-function at $r\geq0$, by means of 
\begin{align}
\label{eq:estJinhom}
\widehat J_{\rm inhom}^L(r)
=
\frac{1-\widehat H_{\rm inhom}^L(r)}
{1-\widehat F_{\rm inhom}^L(r)}
, 
\end{align}
provided that the denominator is non-zero. 
When we are working under the assumption of a pseudostationary process we alter the estimation by removing the ratios $\bar{\rho}/\rho(x)$ in \eqref{eq:estFinhom} and \eqref{eq:estHinhom} since $\rho(\cdot)\equiv\rho>0$.

Part of the motivation for considering a minus sampling estimation scheme here is that it yields (ratio) unbiasedness. The proof of Theorem \ref{ThmUnbiased} can be found in the Appendix.
\begin{thm}\label{ThmUnbiased}
The estimator \eqref{eq:estFinhom} is unbiased and \eqref{eq:estHinhom} is ratio-unbiased in the sense that both its numerator and denominator are unbiased.
\end{thm}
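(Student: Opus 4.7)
The plan is to handle the two estimators in parallel: in each case I would reduce the expectation of the empirical expression to the population summary statistic via the Campbell or reduced Campbell–Mecke formulas, and then invoke Theorem \ref{ThmRepresentation} to absorb the residual $u$-dependence.

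For \eqref{eq:estFinhom}, the first observation is that the grid $I$ is deterministic, so the normaliser $N(I \cap L_{\ominus r})$ is non-random and factors out of the expectation. For a fixed $u \in I \cap L_{\ominus r}$ I would rewrite
\[
\prod_{x \in X \cap b_L(u,r)}\Bigl(1 - \tfrac{\bar\rho}{\rho(x)}w_{d_L}(u,d_L(u,x))\Bigr)
=
\prod_{x \in X}\Bigl(1 - \tfrac{\bar\rho\,\1\{x\in b_L(u,r)\}\,w_{d_L}(u,d_L(u,x))}{\rho(x)}\Bigr),
\]
using that factors with $x\notin b_L(u,r)$ are identically $1$. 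Taking expectations, each summand equals $1 - F_{\rm inhom}^L(r;u)$ by the very definition \eqref{FinhomL}. Under IRMPS, Theorem \ref{ThmRepresentation} collapses this to the constant $1 - F_{\rm inhom}^L(r)$, and averaging over $I \cap L_{\ominus r}$ yields $\E[\widehat F_{\rm inhom}^L(r)] = F_{\rm inhom}^L(r)$.

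For \eqref{eq:estHinhom}, the analogous product-rewriting puts the summand into the form appearing in \eqref{HinhomL} under $\E_u^!$. I would then apply the reduced Campbell–Mecke formula \eqref{RedCM} with
\[
f(u,\mathbf{x}) = \1\{u\in L_{\ominus r}\}\prod_{x\in \mathbf{x}}\Bigl(1 - \tfrac{\bar\rho\,\1\{x\in b_L(u,r)\}\,w_{d_L}(u,d_L(u,x))}{\rho(x)}\Bigr),
\]
obtaining that the expected numerator of $1-\widehat H_{\rm inhom}^L(r)$ equals $\int_{L_{\ominus r}}(1-H_{\rm inhom}^L(r;u))\rho(u)\,\de_1 u$, which Theorem \ref{ThmRepresentation} reduces to $(1-H_{\rm inhom}^L(r))\int_{L_{\ominus r}}\rho(u)\,\de_1 u$. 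The ordinary Campbell formula (the $m=1$ case of \eqref{eq:product}) gives $\E[N(X\cap L_{\ominus r})]=\int_{L_{\ominus r}}\rho(u)\,\de_1 u$, so the ratio of these two expectations is exactly $1-H_{\rm inhom}^L(r)$, which is what ratio-unbiasedness asserts.

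The one delicate point is the ``(almost) any $u$'' clause in Theorem \ref{ThmRepresentation}. For the $H$-estimator this is harmless, because the exceptional $\nu_L$-null set is absorbed into the $\rho(u)\,\de_1u$ integration. For the $F$-estimator, strict unbiasedness requires the fixed grid $I$ to avoid this null set; this is a generic condition on $I$ that one would impose up front (any small perturbation of a given grid satisfies it). A secondary, more routine issue is justifying the interchange of expectation with the infinite series implicit in the definitions \eqref{FinhomL} and \eqref{HinhomL}, which is guaranteed by the absolute-convergence hypothesis \eqref{AbsConv} exactly as exploited in the proof of Theorem \ref{ThmRepresentation}.
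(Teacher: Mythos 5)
Your proof is correct and follows essentially the same route as the paper's: the $F$-part factors out the deterministic normaliser and identifies each summand's expectation with $1-F_{\rm inhom}^L(r;u)$ via \eqref{FinhomL}, and the $H$-part applies the reduced Campbell--Mecke formula \eqref{RedCM} to get $(1-H_{\rm inhom}^L(r))\int_{L_{\ominus r}}\rho(u)\,\de_1u$ for the numerator and the Campbell formula for the denominator, with Theorem \ref{ThmRepresentation} removing the $u$-dependence in both cases. Your added remarks on the grid avoiding the exceptional null set and on interchanging expectation with the series are reasonable refinements the paper leaves implicit.
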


\subsubsection{Intensity estimation}\label{sec:IntensityEstimation}
In practice, the true intensity function is unknown so in order to exploit the estimators \eqref{eq:estFinhom} and \eqref{eq:estHinhom} we need to estimate the intensity function $\rho(\cdot)$ in advance and then plug this estimate into \eqref{eq:estFinhom} and \eqref{eq:estHinhom}. Obtaining good estimates for intensity functions of point processes on linear networks has been a challenging task due to geometrical complexities and unique methodological problems. Nevertheless, there have been a few particularly interesting proposals, including diffusion based kernel estimation \citep{mcswiggan2017}, an edge-corrected classical kernel-based intensity estimator \citep{MFJ18}, resample-smoothed Voronoi estimation \citep{Moradi2019} and fast kernel smoothing using two-dimensional convolutions \citep{rakshit2019fast}. 
Although \citet{Moradi2019} showed that their approach in general generates better intensity estimates than kernel-based approaches, we have observed that \eqref{eq:estJinhom} generally performs better numerically when it is combined with a kernel estimator. Regarding the associated bandwidth selection, one would expect that the estimator of \citet{cronie2018bandwidth} or Poisson likelihood cross-validation \citep{BRT15} would be the best choice. It seems, however, that Scott's rule of thumb \citep{scott2015multivariate,rakshit2019fast} in general yields more stable outputs of \eqref{eq:estJinhom}. 
In our numerical evaluations we make use of the fast kernel estimator of \citet{rakshit2019fast} which we combine with Scott's rule of thumb.

\section{Numerical evaluation}\label{sec:numerical}
We next numerically evaluate the performance of the estimator of $J_{\rm inhom}^L(r)$. 
For this purpose, we simulate point patterns from three different models with different types of spatial interaction -- spatial randomness, regularity and clustering. For each model we make use of two linear networks: the network of a Chicago crime dataset and the network of a dataset on spider locations, which are both publicly available in the \textsf{R} package \textsf{spatstat} \citep{baddeley2004spatstat,BRT15}.


Here we provide some general information about both of the networks:
\begin{itemize}
    \item The Chicago linear network has $503$ segments and $338$ nodes, where $44$ of them have degree $1$, thus forming the boundary of the network. The total length of the network is $31150.21$ feet and its maximum node degree is 5. The network is embedded in the window $W=[0.3894, 1281.9863] \times [153.1035, 1276.5602]$.
    
    \item The linear network for the data on spider locations is constructed by 156 nodes and 203 segments. It has a total length of 20218.75 millimetres and a maximum node degree of 3. There are 31 nodes with degree 1. The network is embedded in the window $W=[0, 1125]^2$.
\end{itemize}{}





For each model, and each network, we first generate one realisation and then estimate the intensity in accordance with the recommendations in Section \ref{sec:IntensityEstimation}. 
We next estimate $J_{\rm inhom}^{L}$ based on that realisation, together with pointwise critical envelopes which are computed based on $99$ realisations of a Poisson process with the obtained intensity estimate as intensity function. 
 We do so to get an indication of how well our $J$-function estimator can reveal deviations from a Poisson process behaviour and what type of interaction the underlying model possesses. 
Throughout, following most of the previous literature on analysis of spatial interaction on linear networks, we let $d_L$ be given by the shortest-path distance.

It is important to note that in neither the log-Gaussian Cox process example nor the thinned simple sequential inhibition example below we have actually verified that the models are indeed IRMPS under $d_L$. They are models which, based one our general understanding, should exhibit clustering and inhibition, and we here want to see if our $J$-function estimator manages to capture these expected behaviours. 

\subsection{Poisson process}
We here consider an inhomogeneous Poisson process $X$ with intensity function $\rho(x,y)=0.005 |\sin(x)|$; the expected number of points on the Chicago network is $101.9$, and on the spider location network it is $62.3$. A single realisation on each network is shown in the top row of Figure \ref{fig:poiss}. The bottom row of Figure \ref{fig:poiss} shows the inhomogeneous linear $J$-function estimates for each realisation together with the corresponding pointwise critical envelopes based on $99$ simulations from a Poisson process with the estimated intensity as intensity function. 
For each of the networks it can be seen that the $J$-function estimate of the simulated pattern stays around the mean of the $J$-function estimates for the envelope processes, and it entirely remains within the envelope.

\begin{figure}[!ht]
    \centering
   \includegraphics[scale=.3]{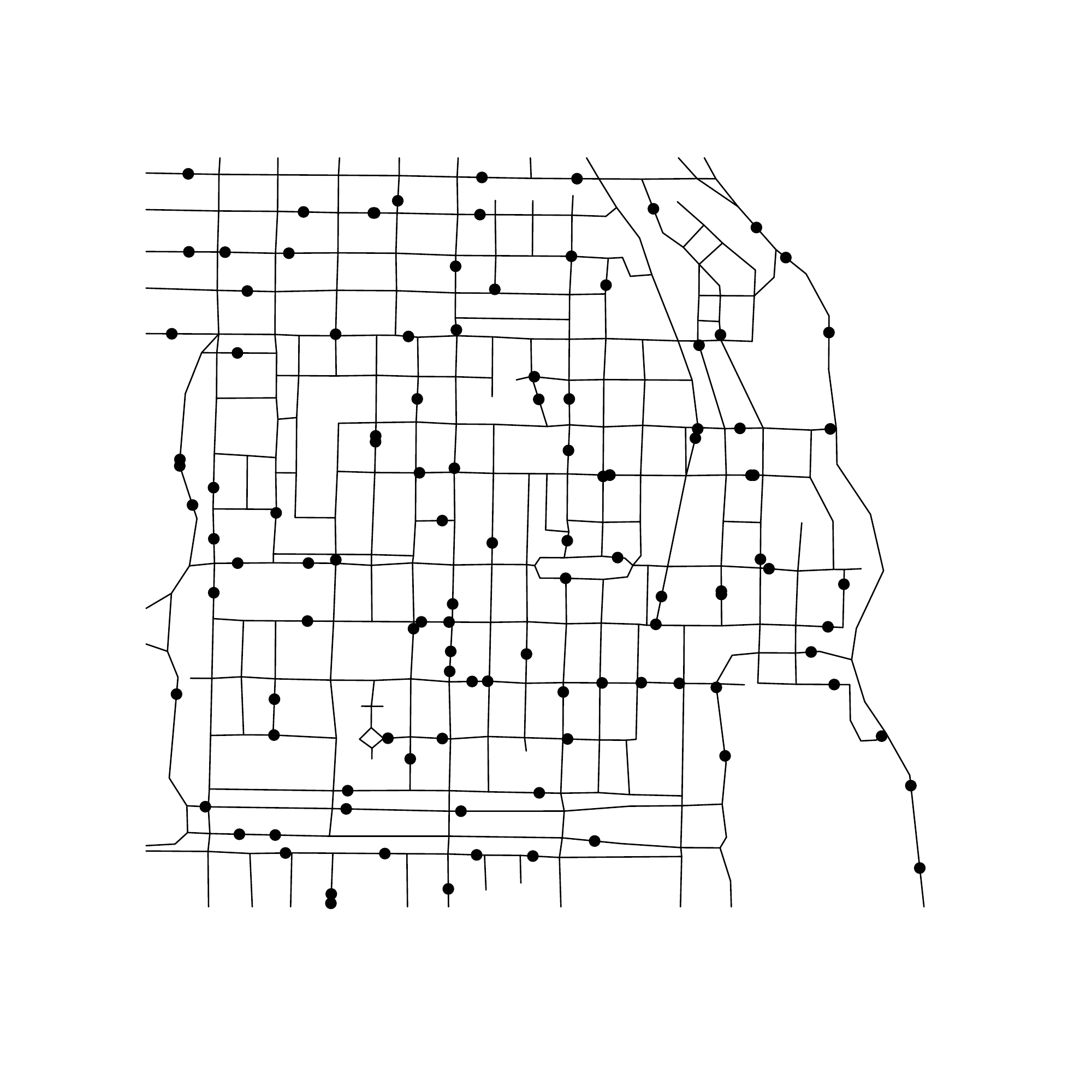}
   \includegraphics[scale=.3]{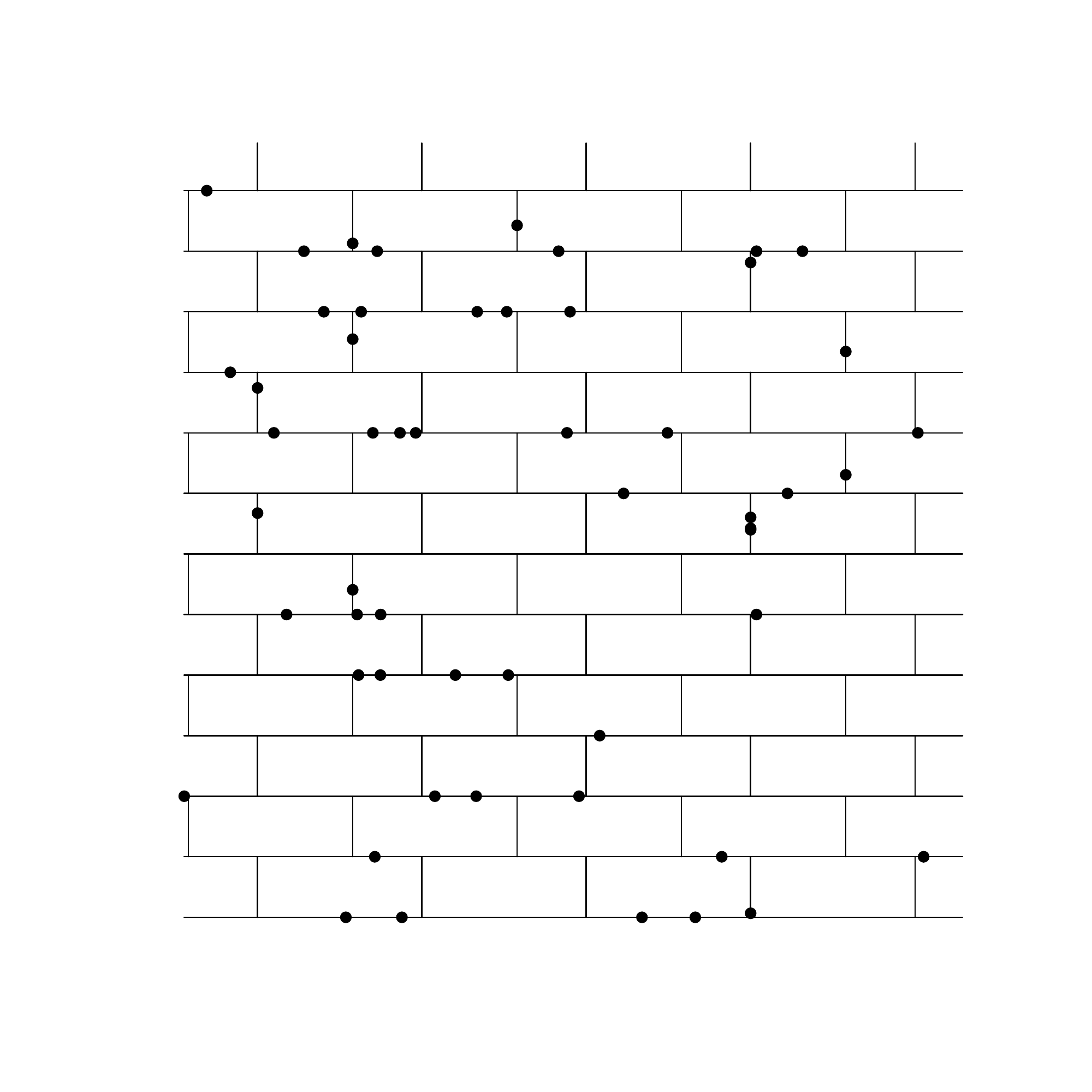}
   \includegraphics[scale=.3]{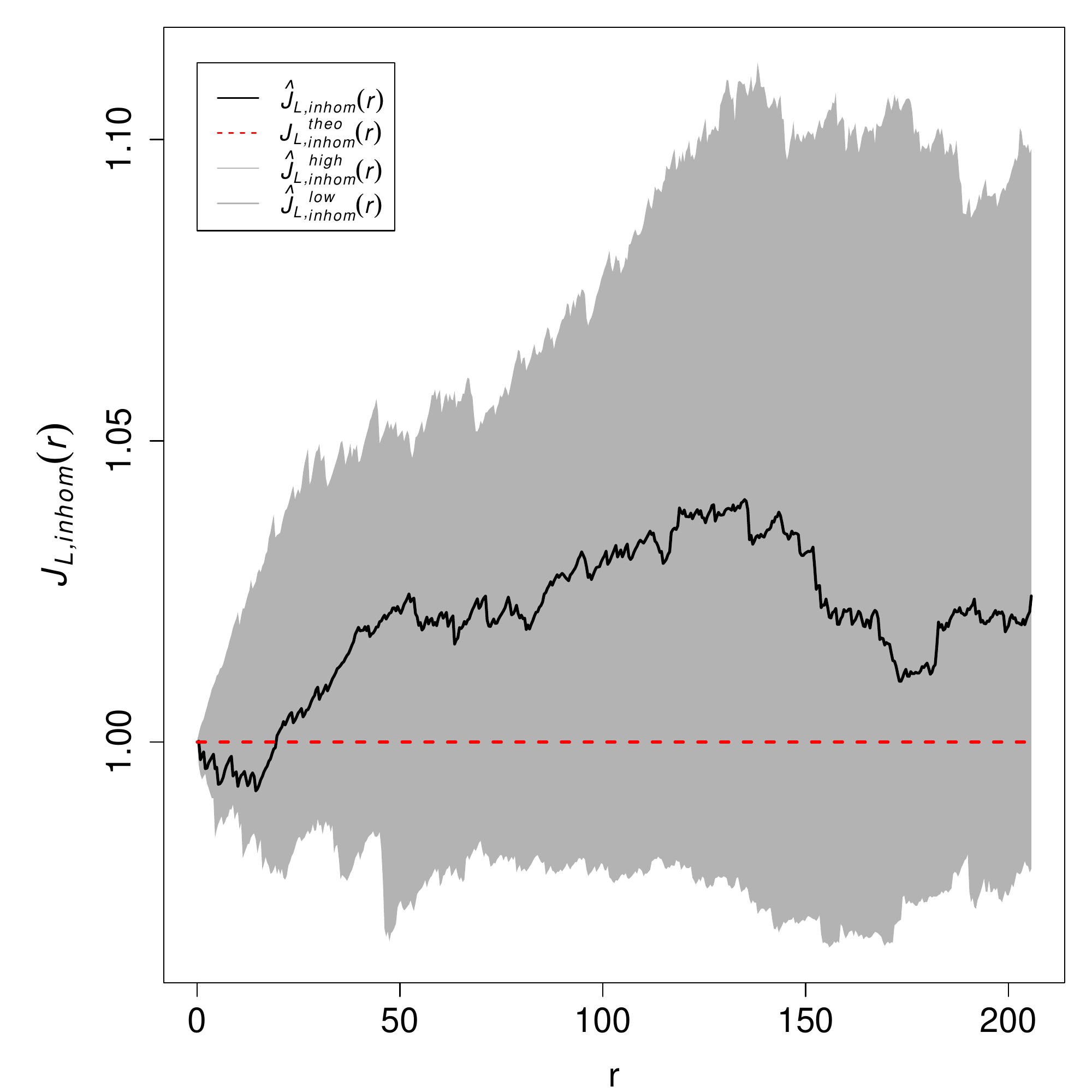}
   \includegraphics[scale=.3]{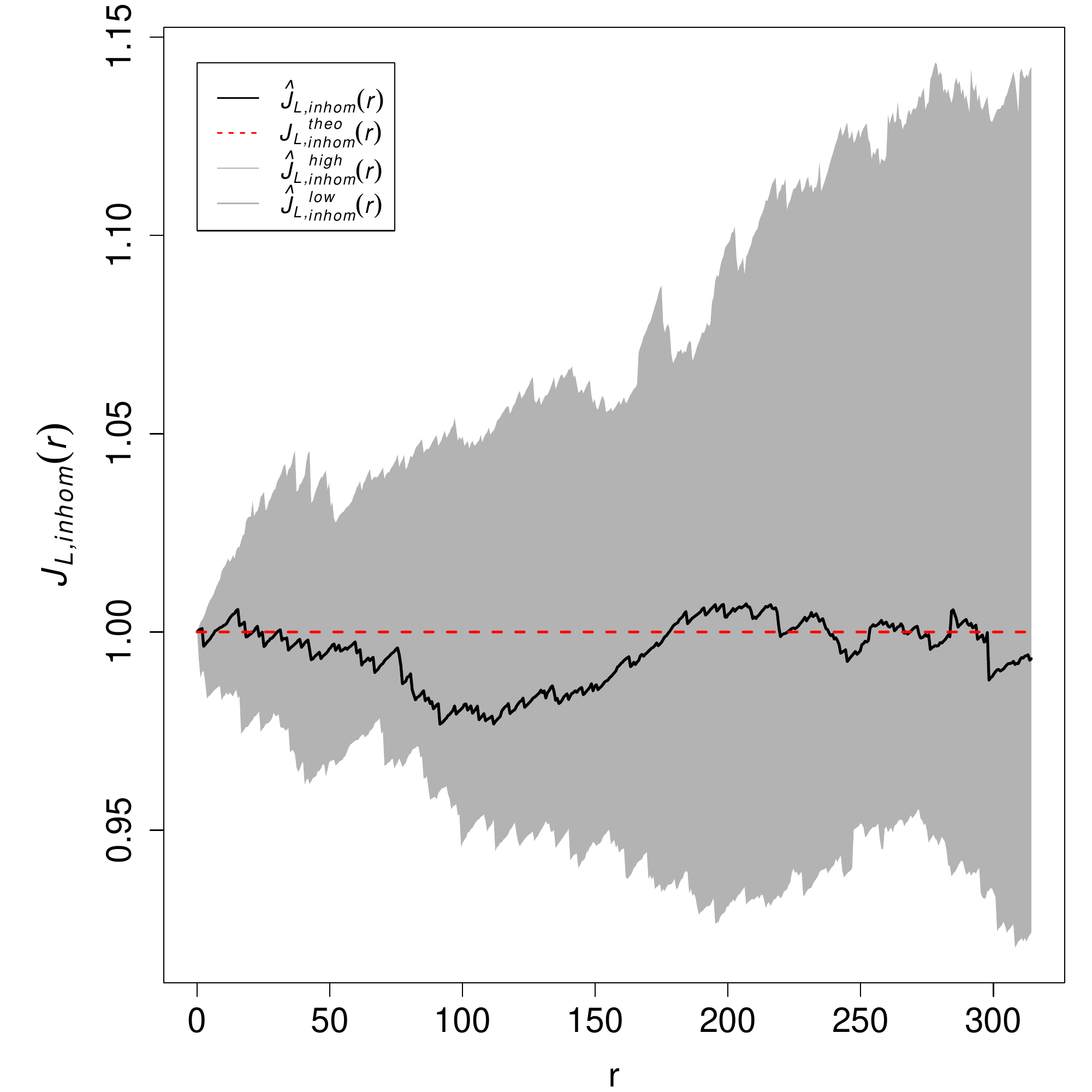}
    \caption{{\em Top row}: Realisations of inhomogeneous Poisson processes with intensity function $\rho(x,y)=0.005 |sin(x)|$ on the Chicago network (left) and on the spiders network (right). {\em Bottom row}: The corresponding inhomogeneous linear $J$-function estimates for each realisation, together with pointwise critical envelopes (grey area) based on $99$ simulations of inhomogeneous Poisson processes with the estimated intensities of the realisations in the top row as intensities. The solid lines are the estimated $J$-functions for the observed patterns and the dashed lines represent the theoretical linear $J$-function value for Poisson processes. 
    Each $J$-function estimate plot is shown below its corresponding realisation.
    }
    \label{fig:poiss}
\end{figure}

\subsection{Thinned simple sequential point process}
We now consider a scenario where there is inhibition between the points. Initially we generate a realisation of a simple sequential inhibition point process with a total point count of 300 and inhibition distance $0.001|L|$; this results in an inhibition distance of 46 feet for the Chicago network, and 30 mm for the spider location network. We then thin each pattern based on the constant retention probability $p(x,y)=0.3$, $(x,y)\in L$. 
This results in a thinned simple sequential process with intensity function 
$\rho(x,y)=0.3(300/|L|)$, 
with a total expected number of points of $90$; 
$|L|$ is the total length of the network. The top row of Figure \ref{fig:TSSI} shows two realisations of this process: on the Chicago network in the left panel and on the spider location network in the right panel. The corresponding estimated inhomogeneous linear $J$-functions for each realisation is displayed on the bottom row of Figure \ref{fig:TSSI}. A critical envelope (grey area) based on $99$ simulations from a Poisson process with the estimated intensity as intensity function 
is displayed together with each estimated  $J$-functions. We see that it properly identifies an inhibitory behaviour between the points. 

It should be noted that the model here in fact is homogeneous so it may be argued that we should instead use the homogeneous estimator where we set $\bar\rho/\rho(x)=1$ in \eqref{eq:estFinhom}, \eqref{eq:estHinhom} and \eqref{eq:estJinhom}. However, since we in practice do not actually know whether a point pattern comes from a process which is homogeneous or not, we here want to see how well \eqref{eq:estJinhom} captures spatial interaction under the (incorrect) assumption that the underlying process is inhomogeneous.

\begin{figure}[!ht]
    \centering
    \includegraphics[scale=.3]{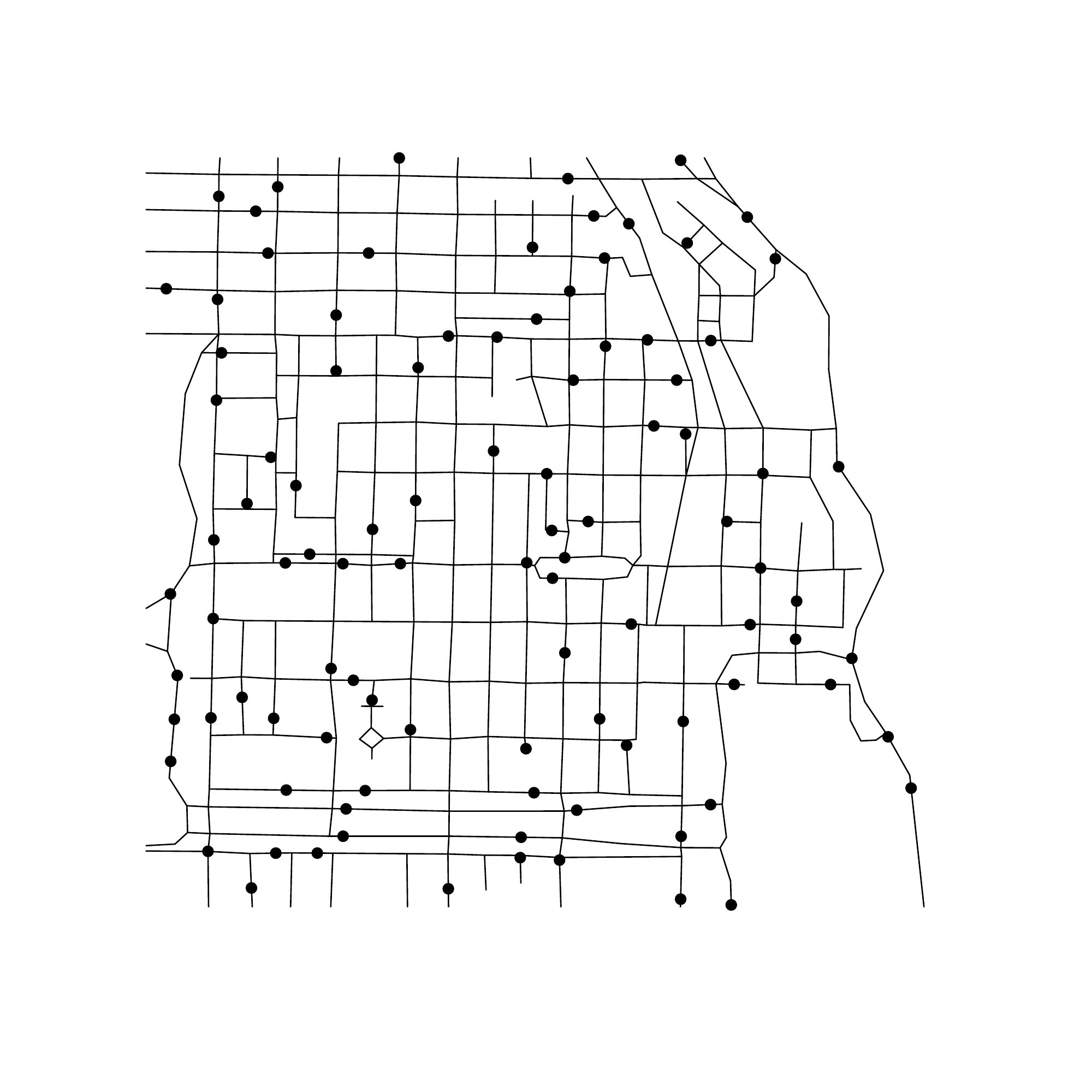}
    \includegraphics[scale=.3]{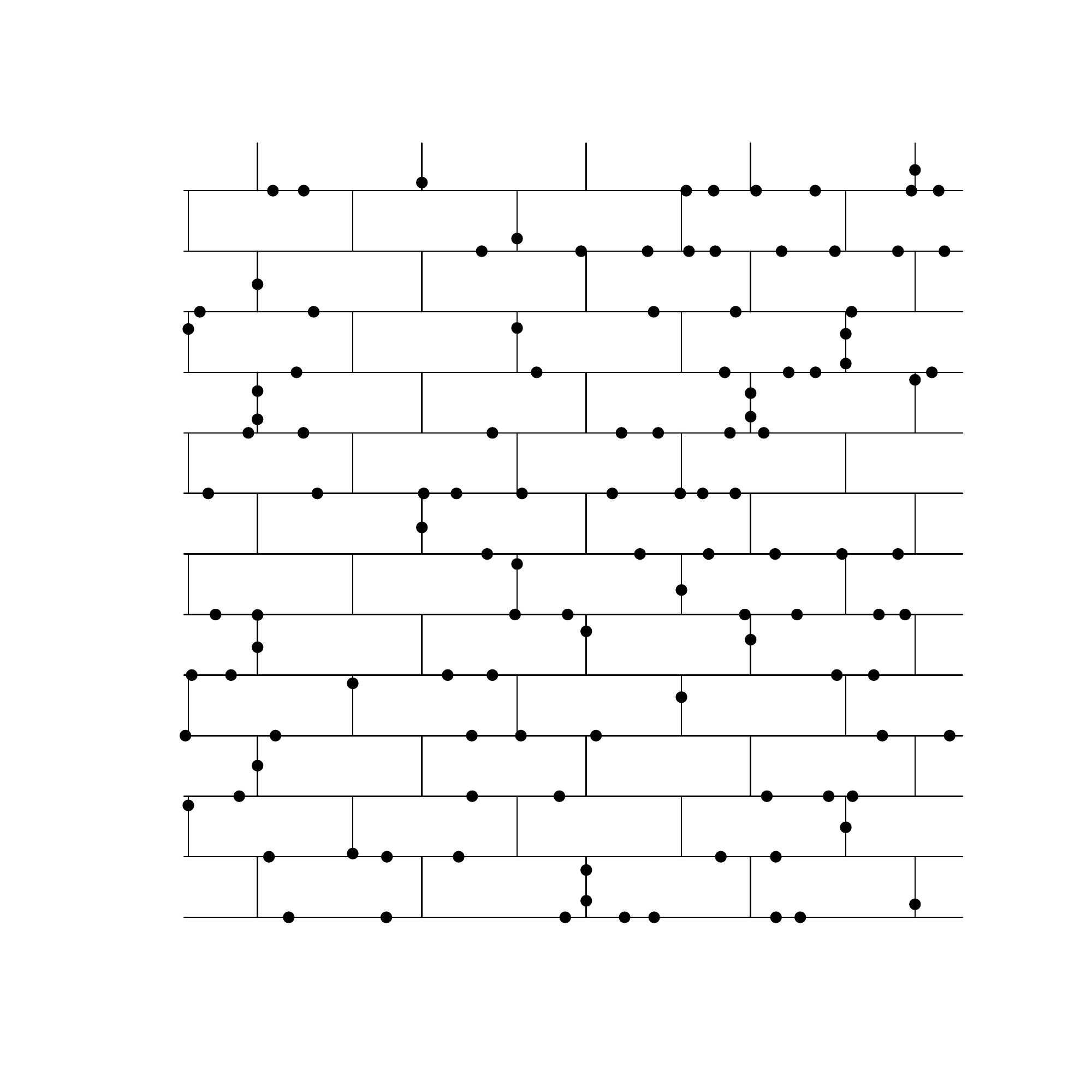}
    \includegraphics[scale=.3]{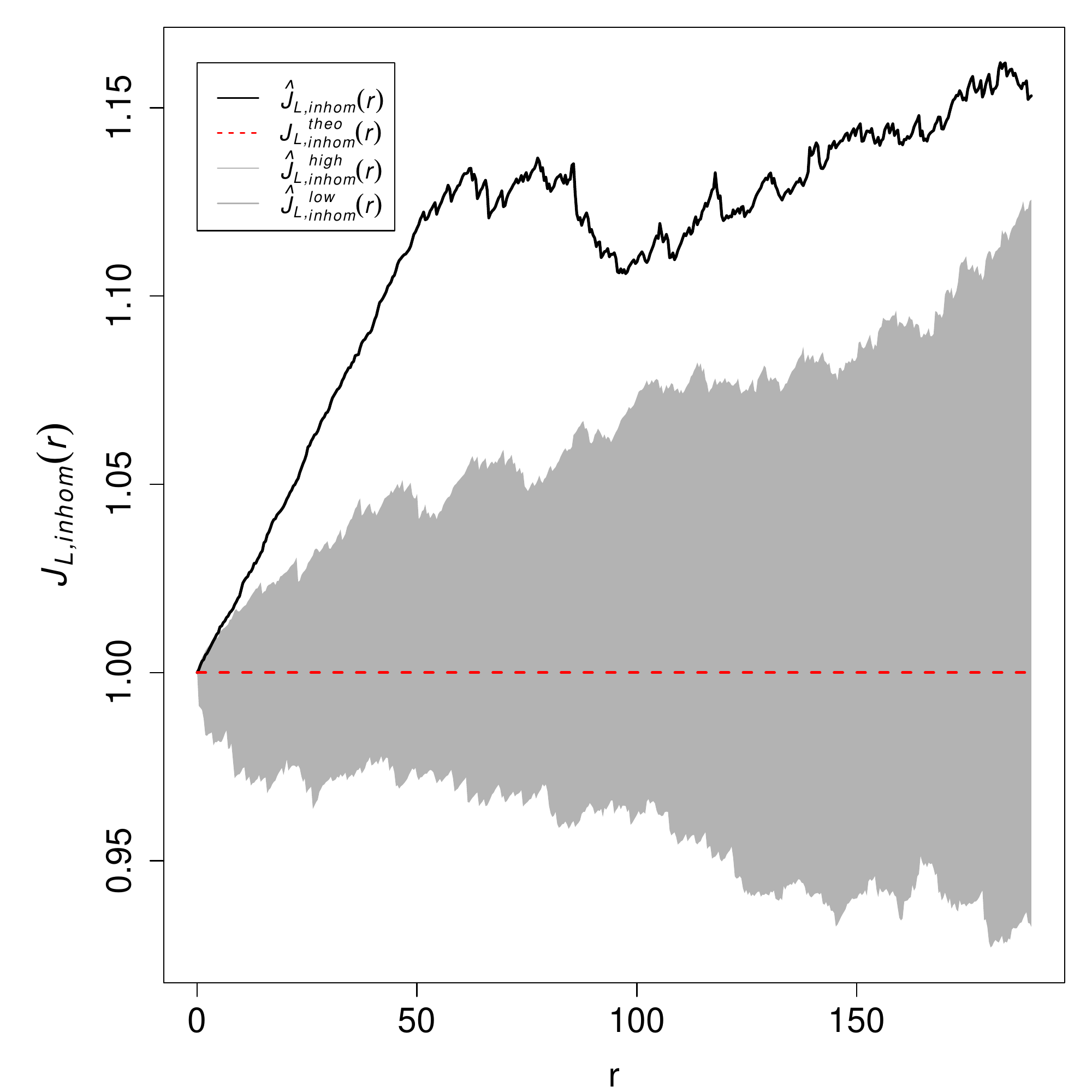}
    \includegraphics[scale=.3]{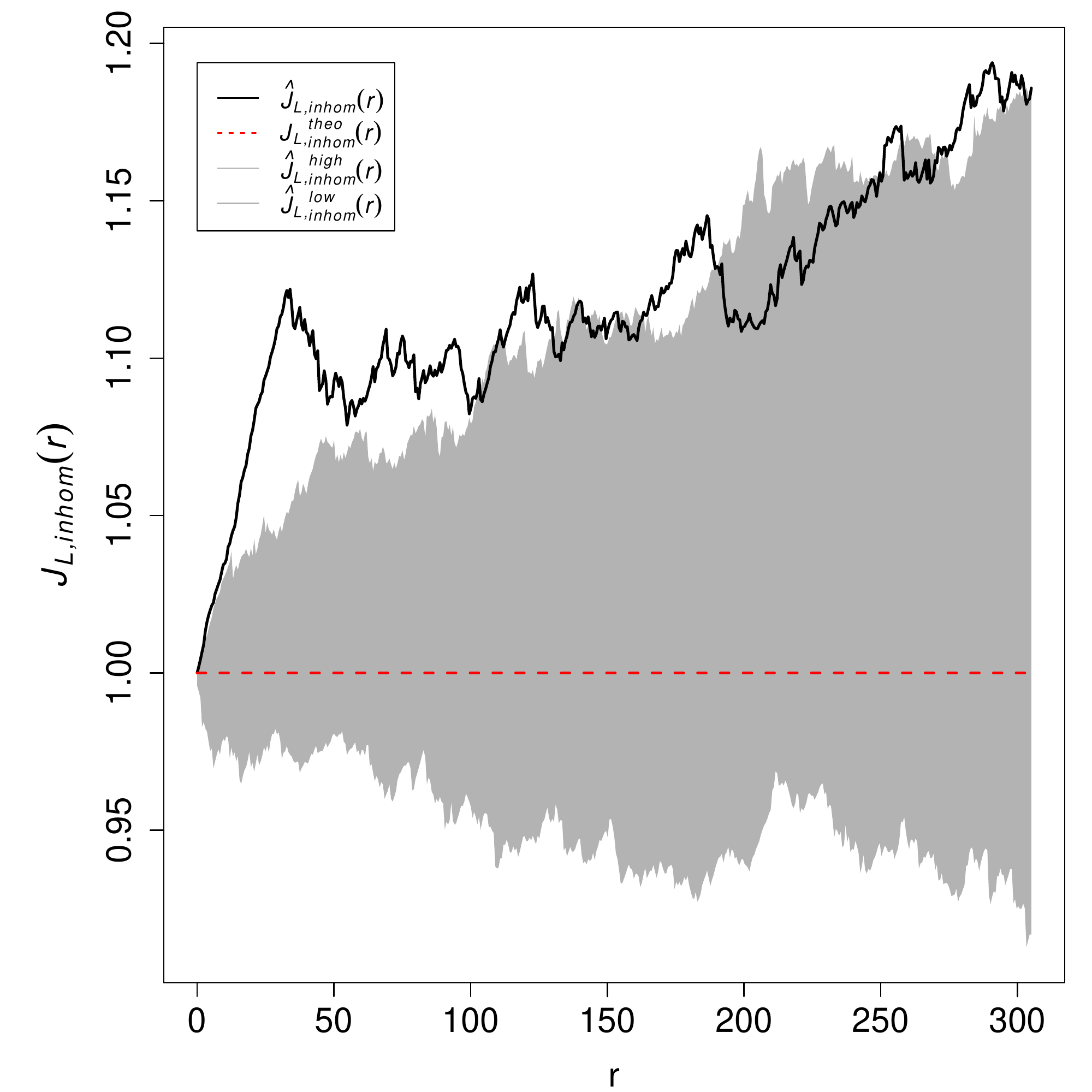}
    \caption{{\em Top row}: Realisations of  independently thinned simple sequential inhibition processes with the intensity function $\rho(x,y)=0.3(300/|L|)$; the unthinned processes have inhibition distance $0.001|L|$; 
    this results in an inhibition distance of 46 feet for the Chicago network (left) and 30 mm for the spider location network (right).
    The corresponding inhomogeneous linear $J$-function estimates for each realisation, together with pointwise critical envelopes (grey area) based on $99$ simulations of inhomogeneous Poisson processes with the estimated intensities of the realisations in the top row as intensities. The solid lines are the estimated $J$-functions for the observed patterns and the dashed 
    lines represent the theoretical linear $J$-function value for Poisson processes. 
    Each $J$-function estimate plot is shown below its corresponding realisation.
    }
    \label{fig:TSSI}
\end{figure}

\subsection{Log-Gaussian Cox process}
In this section we first generate a realisation of a log-Gaussian random field on the window $W=[x_{\min},x_{\max}]\times[y_{\min},y_{\max}]$ and then evaluate it only at locations on the network $L\subseteq W$ in question. We then use this extracted realisation to simulate a realisation of an inhomogeneous Poisson process on the network. The driving Gaussian random field on $W$ has
mean function $(x_1,y_1)\mapsto \log 0.002 + (x_1-(x_{\max}-x_{\min}))/|L|$ and covariance function $ ((x_1,y_1),(x_2,y_2))\mapsto \exp \{- \| (x_1,y_1)-(x_2,y_2) \| \}$, $(x_1,y_1),(x_2,y_2)\in W$. Hereby, the intensity is given by $\rho(x,y) = 0.002 \exp\{ [(x_1-(x_{\max}-x_{\min}))/|L|] + 0.5 \}$, $(x,y)\in L$. The expected number of points on the Chicago network is 101.1, and for the spiders network it is 64.2. The top row of Figure \ref{fig:lgcp} shows two realisations of such a model on the Chicago network (left) and the spiders network (right). The corresponding estimated inhomogeneous linear $J$-function for each realisation is exhibited on the bottom row of Figure \ref{fig:lgcp}. A critical envelope (grey area) based on 99 realisations from a Poisson process with the estimated intensity as intensity functions is displayed together with each estimated $J$-function. 
From Figure \ref{fig:lgcp} we can see that the $J$-function estimate stays below the envelope for small and moderate interaction ranges $r$, thus indicating a clustering behaviour for the underlying model. 

\begin{figure}[!ht]
    \centering
   \includegraphics[scale=.3]{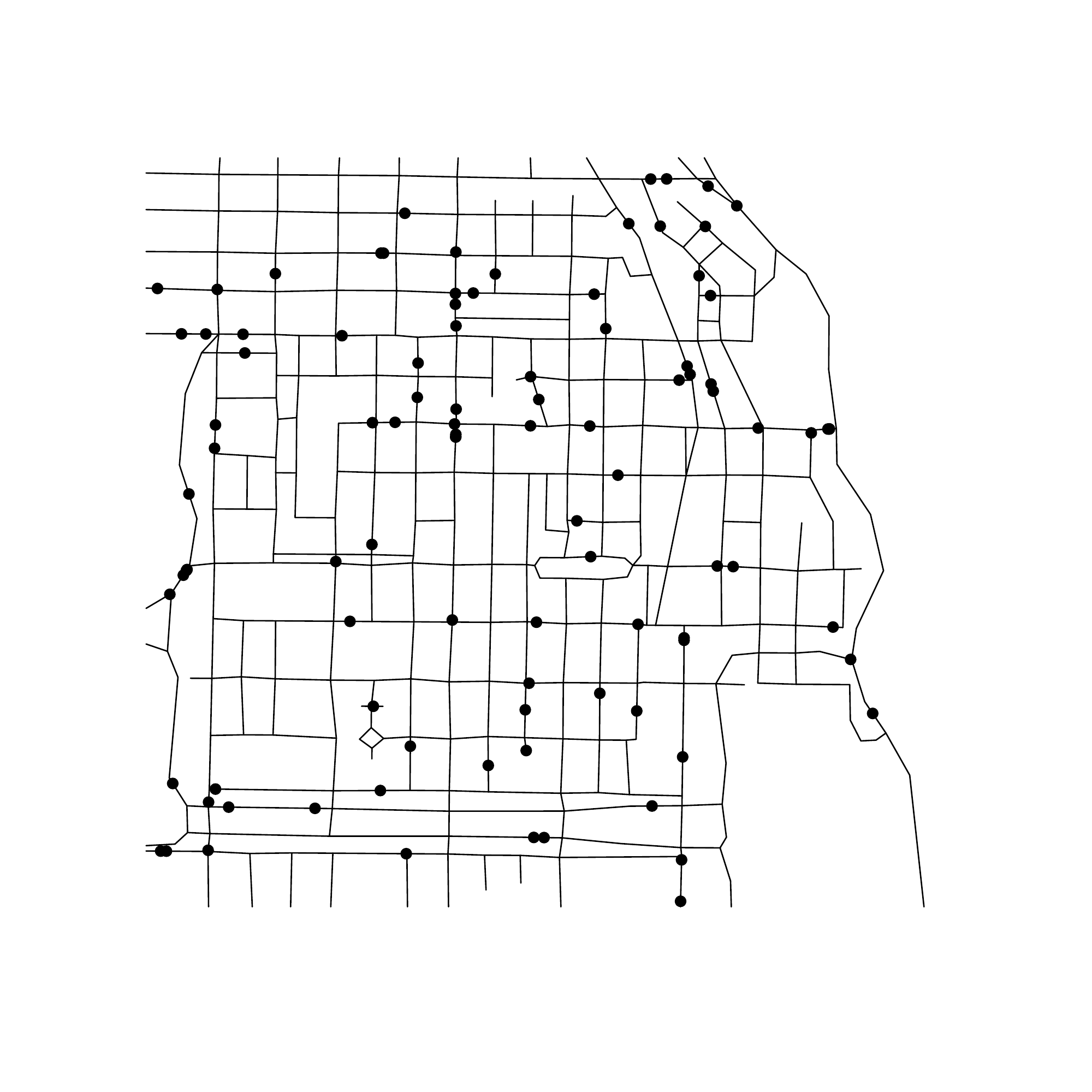}
   \includegraphics[scale=.3]{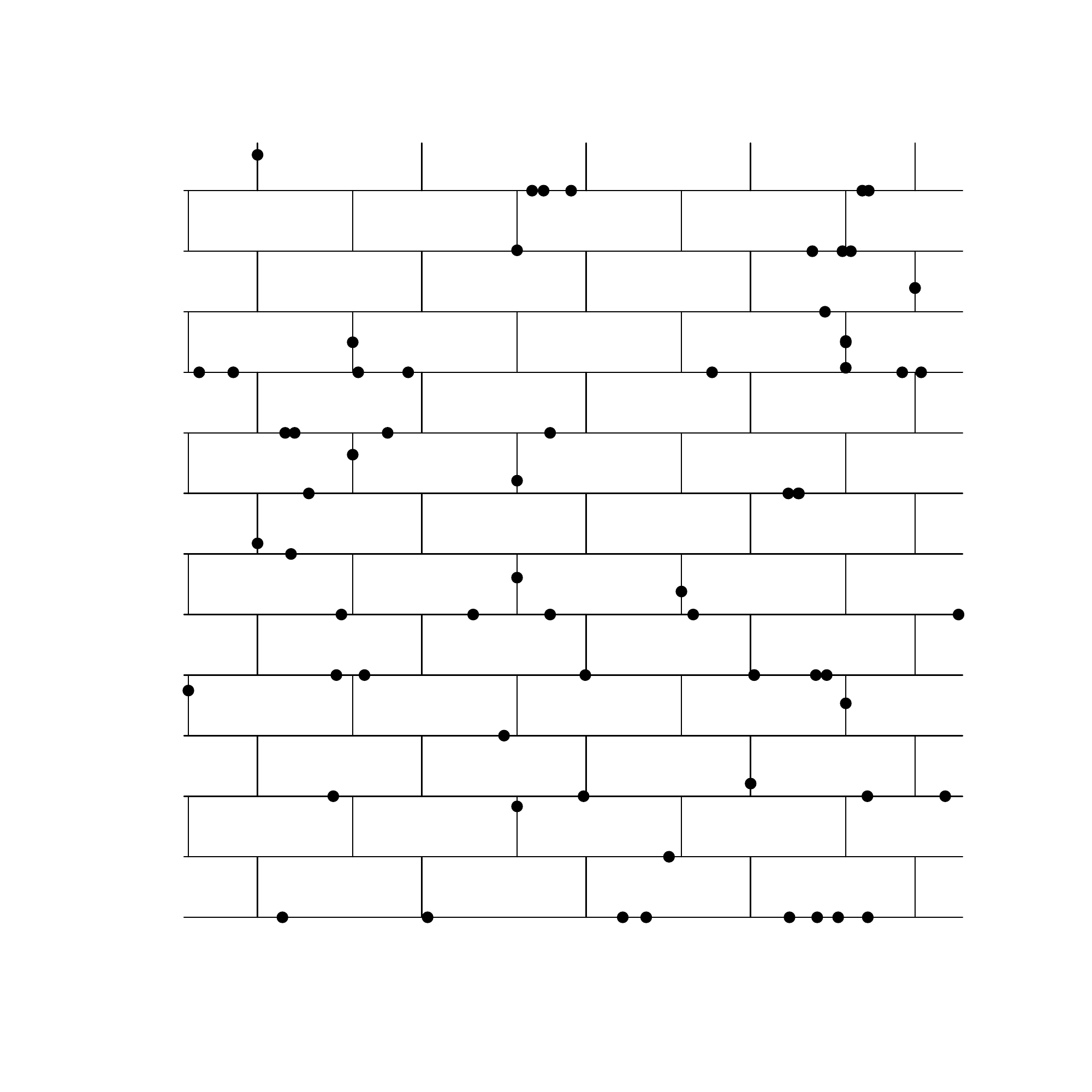}
   \includegraphics[scale=.3]{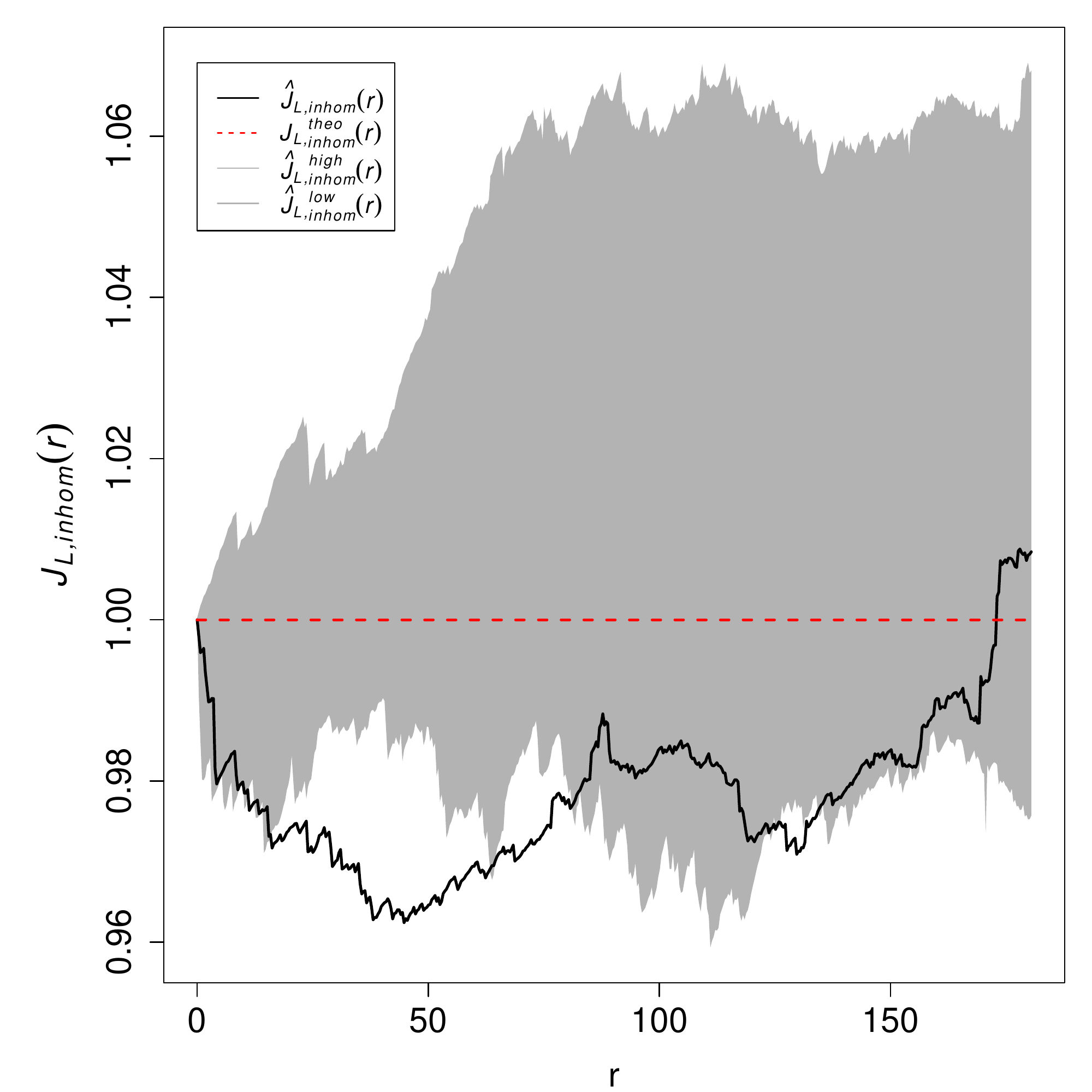}
   \includegraphics[scale=.3]{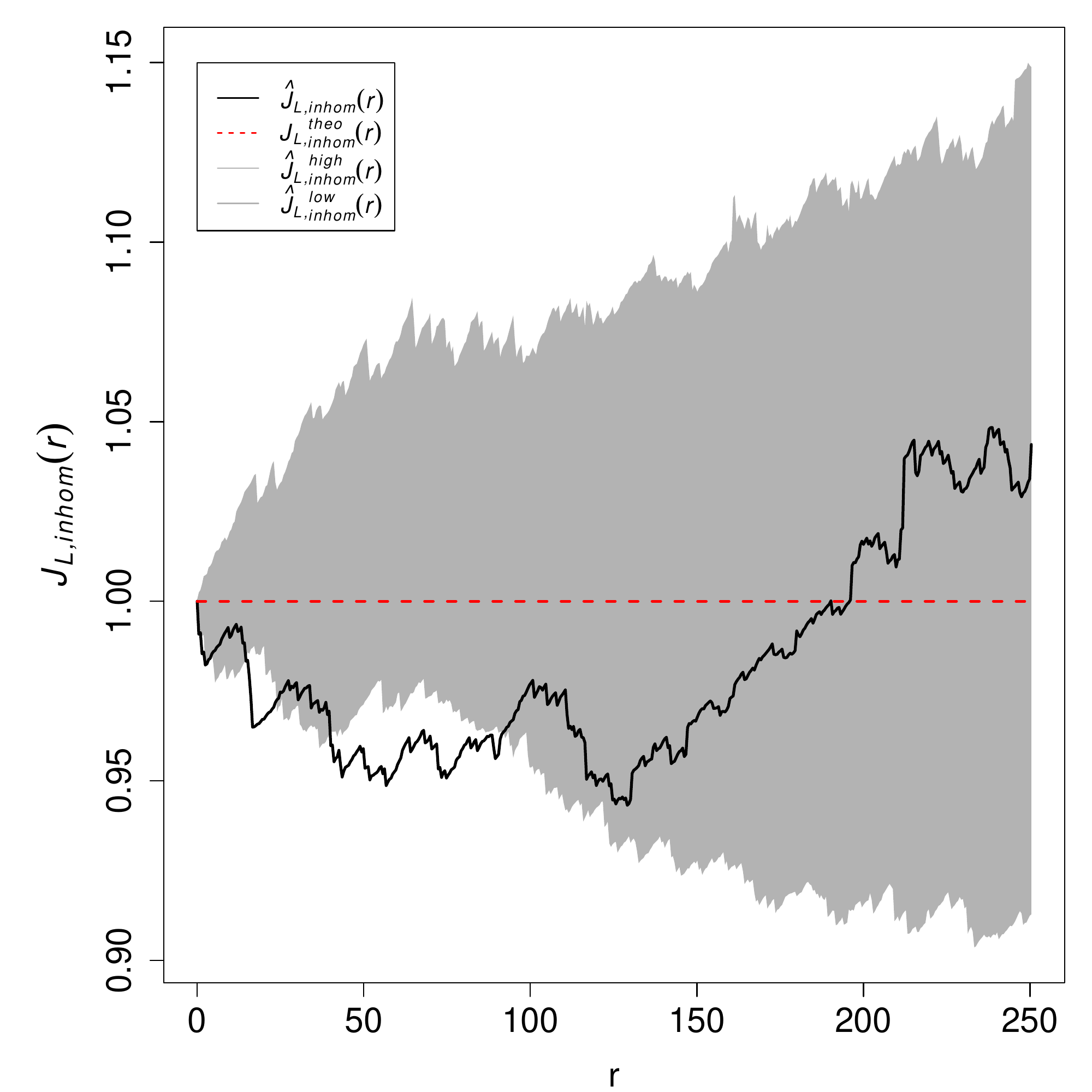} \caption{{\em Top row}: Realisations of log Gaussian Cox process models on the Chicago network (left) and on the spiders network (right). {\em Bottom row}: The corresponding inhomogeneous linear $J$-functions for each realisation together with pointwise critical envelopes (grey area) based on $99$ simulations of inhomogeneous Poisson processes with the estimated intensities of the realisations in the top row as intensities. The solid lines are the estimated $J$-functions for the observed patterns and the dashed lines represent the theoretical linear $J$-function value for Poisson processes. Each $J$-function plot is displayed below its corresponding realisation.
    }
    \label{fig:lgcp}
\end{figure}

\section{Data analysis}\label{sec:data}
We next apply the inhomogeneous linear $J$-function estimator to the two real datasets in Figure \ref{fig:data}: a) a point pattern of motor vehicle traffic accidents in an area of Houston, US, which was previously studied in \cite{Moradi2019}, b) the spider dataset which represents the locations of webs made by spiders in the mortar spaces of a brick wall -- this dataset has also previously been studied in \cite{ABN12}. 
As in the case of the numerical evaluations, we here let $d_L$ be given by the shortest-path distance.

\subsection{Houston motor vehicle traffic accidents}
The right panel in Figure \ref{fig:data} shows the locations of $249$ traffic accidents in an area of Houston, US, during April, $1999$. The linear network $L$ has a total length of $708301.7$ feet, and has $187$ nodes with a maximum node degree of $4$, and $253$ line segments. 
For further details, see \citet{levine2006, levine2009} and \citet{Moradi2019}.
\citet{Moradi2019} studied intensity estimation on this dataset, using their resample-smoothed Voronoi intensity estimator. Figure \ref{fig:Houston} shows that the estimated inhomogeneous linear $J$-function is almost entirely inside the pointwise critical envelope area which has been computed based on $99$ simulations of a Poisson process with intensity given by the estimated intensity (which is obtained in analogy with the numerical evaluation section). 
Since the estimated $J$-function stays within the envelopes (except at the very end) there seem to be no clear indications of clustering/inhibition.


\subsection{Spiders data}\label{sec:spider}
The left panel in Figure \ref{fig:data} shows the locations of 48 webs of the urban wall spider Oecobius navus on the mortar lines of a brick wall. This dataset was recorded by \cite{voss1999habitat} and it is stored in the \textsf{R} package \textsf{spatstat} \citep{baddeley2004spatstat,BRT15}. It has previously been studied by \cite{ABN12} through second-order summary statistics. The right panel of Figure \ref{fig:Houston} shows the estimated inhomogeneous linear $J$-function for this dataset together with a pointwise critical envelope based on $99$ simulations of a Poisson process with the estimated intensity as intensity function (which is obtained in analogy with the numerical evaluation section). The estimated $J$-function is fully inside the envelope and does not indicate any deviations from being Poisson -- this is in keeping with \citet{ABN12}.

\begin{figure}[!ht]
    \centering
    \includegraphics[scale=.3]{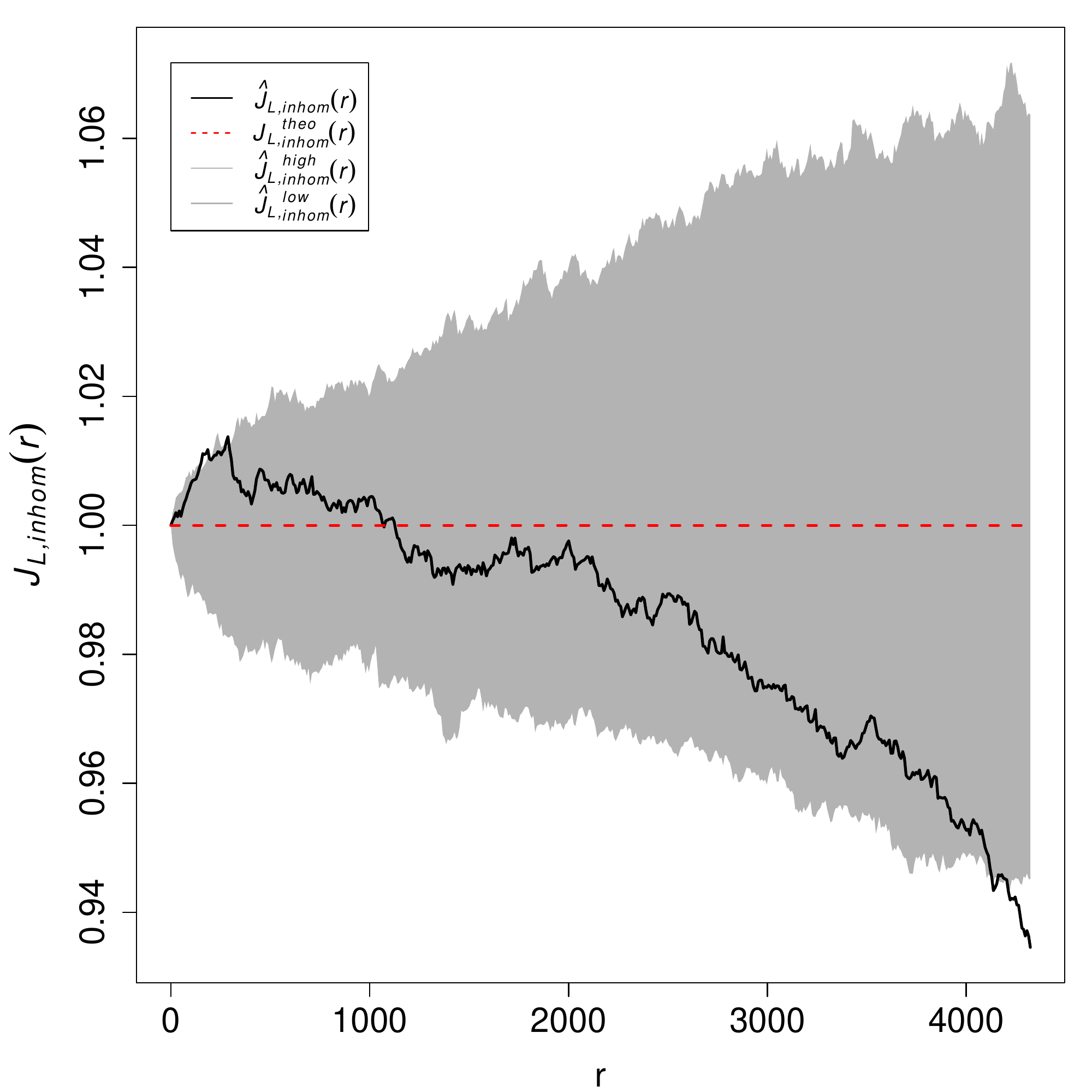}
    \includegraphics[scale=.3]{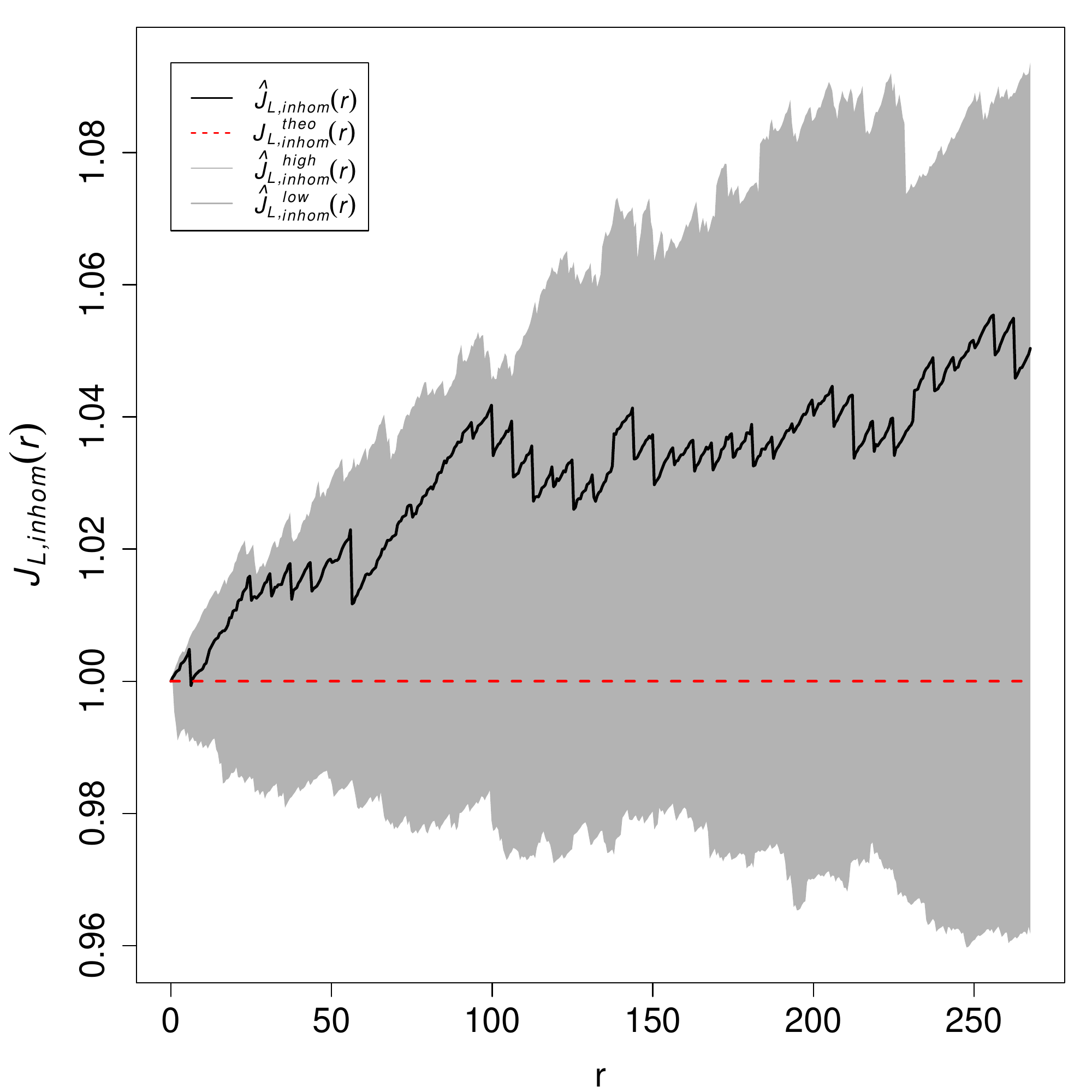}
    \caption{
    Estimated inhomogeneous linear $J$-functions for a dataset of motor vehicle traffic accidents in an area of Houston, US, during April, $1999$ (left) and a dataset of spiders webs on a brick wall (right). 
    Each plot is displayed together with a pointwise critical envelope based on $99$ simulations of a Poisson process generated from the corresponding estimated intensity. The solid lines are the estimated inhomogeneous linear $J$-functions for the observed patterns and the dashed lines represent the theoretical linear $J$-function for Poisson processes.
    }
    \label{fig:Houston}
\end{figure}


\section{Discussion}\label{sec:discuss}

Methods to statistically analyse point patterns on linear networks/graphs have become increasingly important, as the amount of available linear network point process data has had a steady increase in the last couple of years. Besides univariate analyses, which are carried out by finding intensity estimates for the data, higher-order analyses which detect spatial interaction, i.e.~clustering or inhibition, are central in the (non-parametric) statistical analysis of linear network point processes. 

In Euclidean domains, the most popular tools to carry out analyses of spatial interaction are second-order summary statistics such as inhomogeneous $K$-functions \citep{InhomK2000}. However, when the spatial domain is given by a linear network there immediately arise challenges due the spatially varying local geometry of the network. Early proposals of $K$-functions for linear networks did not take this into consideration, which resulted in erroneous spatial interaction estimates. This issue was finally solved by \citet{ABN12} for the case where the imposed distance/metric on the network was given by the shortest-path distance -- a chosen distance is used to define balls which determine whether two points are within interaction range of one another. These ideas were later extended to a broader class of metrics by \citet{rakshit2017second}, so-called regular distance metrics, with the argument being that the shortest-path distance need not be the canonical distance for a given set of data on a given network. These $K$-functions are referred to as geometrically corrected $K$-functions.

It may be that second-order summary statistics are insufficient to analyse spatial interaction, because the interactions may be more intricate than pairwise interactions. The inhomogeneous nearest neighbour distance distribution function, the empty space function and the $J$-function \citep{van11} have proven themselves to be powerful higher-order summary statistics which may be used to analyse interaction in spatial point processes in Euclidean domains. Hence, one would hope that these could be extended straightforwardly to the linear network context. However, these summary statistics rely on a form of translation invariance of all the (factorial) moments of the underlying point process, which is referred to as intensity reweighted moment stationarity. To make such an extension possible, one would have to define a family of (transitive) transformations on the network in question but this seems unattainable for general networks. We here find a solution to these issues, which consists of i) proposing a new form of (factorial) moment “invariance”, which we refer to as intensity reweighted moment pseudostationarity (IRMPS), and ii) defining geometrically corrected versions of the above higher-order summary statistics, based on regular distance metrics. As a by-product, we obtain a definition of (pseudo)stationarity for linear network point processes as well as geometrically corrected summary statistics for such point processes. With these new summary statistics at hand, we proceed by studying some of their properties and defining non-parametric estimators for them, which we show are (ratio)unbiased when the true intensity function is assumed to be known. We finally evaluate the estimators of our summary statistics numerically, based on simulated data, and use them to analyse two sets of actual linear network point pattern data.

We believe that our new tools may be valuable as alternatives/complements to second-order summary statistics such as $K$-functions. Moreover, our proposed ideas open up for a significant amount of future research. E.g., it would be interesting to characterise which classes of models are IRMPS. In addition, extensions to spatio-temporal and marked point processes on linear networks are also very interesting (cf.~\citet{Cronie2015,Cronie2016}), given the growing amount of available datasets.

\section*{Acknowledgements}
J. Mateu is funded by Grant MTM2016-78917-R from the Spanish Ministry of Economy and Competitivity.

\appendix
\appendixpage

\section{Proofs}\label{s:Proofs}

\begin{proof}[Proof of Theorem \ref{ThmRepresentation}]
We start with $H_{\rm inhom}^L(r;u)$ and note that we only need to show that $K_{\rm inhom}^{L,m}(r;u)$ does not depend on $u\in L$ for an arbitrary $m$ (note also the use of \eqref{eq:intregulardistpairsmult}):
\begin{align*}
& K_{\rm inhom}^{L,m}(r;u)
=
\\
=&
\int_{b_L(u,r)^m}
g_{m+1}(u,u_1,\ldots,u_m)
\prod_{i=1}^m w_{d_L}(u,d_L(u,u_i))
\de_1u_1\cdots\de_1u_m
\\
=&
\int_{b_L(u,r)^m}
\bar g_{m+1}(d_L(u,u),d_L(u,u_1),\ldots,d_L(u,u_m))
\times
\\
&\times
\prod_{i=1}^m w_{d_L}(u,d_L(u,u_i))
\de_1u_1\cdots\de_1u_m
\\
=&
\int_0^{r}
\cdots
\int_0^{r}
\bar g_{m+1}(0,t_1,\ldots,t_m)
\de t_1\cdots\de t_m
.
\end{align*}
Taking the above equality into account
and recalling \eqref{HinhomL} we see that $H_{\rm inhom}^L(r;u)$ does not depend on $u\in L$ for IRMPS point processes. Turning to $F_{\rm inhom}^L(r;u)$, we have that
\begin{align*}
&F^L_{\rm inhom}(r;u) 
= 
\\
=&
1 - \E\left[
\prod_{x \in X} \left(
1 -
\frac{\bar\rho\1\{x\in b_L(u,r)\}w_{d_L}(u,d_L(u,x))}{\rho(x)}
\right)
\right]
\\
=&
-\sum_{m=1}^{\infty}
\frac{(-\bar\rho)^m}{m!}
\int_{b_L(u,r)^m}
g_m(u_1,\ldots,u_m)
\prod_{i=1}^m w_{d_L}(u,d_L(u,u_i))
\de_1u_1\cdots\de_1u_m,
\\
=&
-\sum_{m=1}^{\infty}
\frac{(-\bar{\rho})^m}{m!}
\int_{b_L(u,r)^m}
\bar g_m(d_L(u,u_1),\ldots,d_L(u,u_m)) 
\times
\\
&\times
\prod_{i=1}^m w_{d_L}(u,d_L(u,u_i))
\de_1u_1\cdots\de_1u_m
\\
=&
-\sum_{m=1}^{\infty}
\frac{(-\bar{\rho})^m}{m!}
\int_0^r \cdots \int_0^r 
\bar g_{m}(t_1,\ldots,t_m)
\de t_1 \cdots \de t_m.
\end{align*}
Since both $H^L_{inhom}(r;u)$ and $F^L_{inhom}(r;u)$ do not depend on $u$, we finally conclude that $J^L_{inhom}(r;u)=J^L_{inhom}(r)=
(1-H_{\rm inhom}^L(r))/(1-F_{\rm inhom}^L(r))$,  following the steps in the proof of \citet[Theorem 1]{van11}. 
\end{proof}

\begin{proof}[Proof of Lemma \ref{LemmaThinning}]
The correlation functions $g_m$, $m\geq 1$, are invariant under thinning since $\rho^{(m)}_{th}(u_1,\ldots,u_m)= \rho^{(m)}(u_1,\ldots,u_m)\prod_{i=1}^{m} p(u_i) $ where $\rho^{(m)}_{th}$, $m\geq 1$, are the product densities of $X_{th}$. 

Next, we have that $1-F_{\rm inhom}^L(r)$ and $1-H_{\rm inhom}^L(r)$ coincide with the generating functionals of $X$ and the reduced Palm process $X_u^!$, respectively, when evaluated in the function $h(x) = 1 - \bar\rho\1\{x\in b_L(u,r)\}w_{d_L}(u,d_L(u,x))/\rho(x)$. Exploiting  \citet[equation (11.3.2)]{DVJ2}, we find that $1-F_{\rm inhom}^{L,th}(r)$ and $1-H_{\rm inhom}^{L,th}(r)$ are given by the same generating functionals, but instead evaluated in the function $x\mapsto 1-p(x)+p(x)h(x)$. Hence, they may alternatively be expressed as the indicated expectations of products. 

\end{proof}

\begin{proof}[Proof of Theorem \ref{ThmUnbiased}]
We first start with $\widehat F_{\rm inhom}^L(r)$ and note that
\begin{eqnarray*}
\E\left[\widehat F_{\rm inhom}^L(r)\right]
&=& 
1 - \frac{1}{N(I \cap L_{\ominus r})} 
\sum_{u \in I \cap L_{\ominus r} } 
\E\left[ \prod_{x \in X \cap b_L(u,r)} 
\left( 1- \frac{\bar{\rho}w_{d_L}(u,d_L(u,x))}{\rho(x)}\right)
\right]\\
&=&
1 - \frac{1}{N(I \cap L_{\ominus r})} 
\sum_{u \in I \cap L_{\ominus r} }(1-F_{\rm inhom}^L(r;u))
\\
&=&
\frac{1}{N(I \cap L_{\ominus r})} 
\sum_{u \in I \cap L_{\ominus r} }F_{\rm inhom}^L(r;u)
\end{eqnarray*}
by Definition \ref{SumStatsL}. By Theorem \ref{ThmRepresentation} all of the summands in the last sum are equal (a.e.) and this yields that the whole expression equals $F_{\rm inhom}^L(r)$, since $X$ is IRMPS.




Turning to $\widehat H_{\rm inhom}^L(r)$, by \eqref{RedCM} and Theorem \ref{ThmRepresentation} we have
\begin{align*}
&\E\left[
\sum_{u \in X \cap L_{\ominus r} } 
\prod_{x \in X\setminus\{u \} \cap b_L(u,r)} 
\left( 1- \frac{\bar{\rho}}{\rho(x)} w_{d_L}(u,d_L(u,x))\right)
\right]
=\\
&=
\int_{L_{\ominus r}}
\E_u^!\left[
\prod_{x \in X\setminus\{u \} \cap b_L(u,r)} 
\left( 1- \frac{\bar{\rho}}{\rho(x)} w_{d_L}(u,d_L(u,x))\right)
\right]
\rho(u)\de_1u
\\
&=
\int_{L_{\ominus r}}
H_{\rm inhom}^L(r;u)
\rho(u)\de_1u
=
H_{\rm inhom}^L(r)
\int_{L_{\ominus r}}
\rho(u)\de_1u.
\end{align*}
The expectation of the denominator of \eqref{eq:estHinhom} is $\int_{L_{\ominus r}}
\rho(u)\de_1u$, which yields the ratio-unbiasedness.

\end{proof}

\bibliographystyle{dcu}
\bibliography{main}

\end{document}